\newtheorem{lem}{Lemma}[section]
\newtheorem{thm}[lem]{Theorem}
\title{Improved Approximation Coflows Scheduling Algorithms for Minimizing the Total Weighted Completion Time and Makespan in Heterogeneous Parallel Networks}
\author{Chi-Yeh~Chen % <-this % stops a space
\\ Department of Computer Science and Information
Engineering, \\ National Cheng Kung University, \\
Taiwan, ROC. \\
chency@csie.ncku.edu.tw.}
\begin{document}

\maketitle
\begin{abstract}
Coflow is a network abstraction used to represent communication patterns in data centers. The coflow scheduling problem encountered in large data centers is a challenging $\mathcal{NP}$-hard problem. This paper tackles the scheduling problem of coflows with release times in heterogeneous parallel networks, which feature an architecture consisting of multiple network cores running in parallel. Two polynomial-time approximation algorithms are presented in this paper, designed to minimize the total weighted completion time and makespan in heterogeneous parallel networks, respectively. For any given $\epsilon>0$, our proposed approximation algorithm for minimizing the total weighted completion time achieves approximation ratios of $3 + \epsilon$ and $2 + \epsilon$ in the cases of arbitrary and zero release times, respectively. Additionally, we introduce an approximation algorithm for minimizing the makespan, achieving an approximation ratio of $2 + \epsilon$ for $\epsilon>0$. Notably, these advancements surpass the previously best-known approximation ratio of $O(\log m/ \log \log m)$ for both minimizing the total weighted completion time and makespan. This result also improves upon the previous approximation ratios of $6-\frac{2}{m}$ and $5-\frac{2}{m}$ for arbitrary and zero release times, respectively, in identical parallel networks.

\begin{keywords}
Scheduling algorithms, approximation algorithms, coflow, datacenter network, heterogeneous parallel network.
\end{keywords}
\end{abstract}

\section{Introduction}\label{sec:introduction}
Coflow is a recently popular networking abstraction aimed at capturing application-level computation and communication patterns within data centers. Distributed applications with structured traffic patterns, such as MapReduce~\cite{Dean2008}, Hadoop~\cite{Shvachko2010, borthakur2007hadoop}, Dryad~\cite{isard2007dryad}, and Spark~\cite{zaharia2010spark}, have proven advantages in terms of application-aware network scheduling~\cite{Chowdhury2014, Chowdhury2015, Zhang2016, Agarwal2018}. During the computation phase of these applications, a substantial amount of intermediate data (flows) is generated, which needs to be transmitted to various machines during the communication phase. Coflow refers to a set of interdependent flows used to abstract the communication patterns between the two groups of machines within the data center where the completion time of a coflow depends on the completion time of the last flow within the coflow~\cite{Chowdhury2012, shafiee2018improved}. Therefore, data centers must possess robust data transmission and scheduling capabilities to handle the massive scale of applications and their corresponding data transmission requirements.

Switches can be classified into electrical packet switches and optical circuit switches~\cite{Zhang2021, Tan2021, Li2022}. Optical circuit switches exhibit significantly higher data transmission rates and lower energy consumption. However, in optical circuit switches, each input or output port can only establish one circuit for data transmission at a time. Therefore, reconfiguring the transmitted circuit requires a fixed time, known as reconfiguration delay~\cite{Zhang2021}. On the other hand, electronic packet switches allow flexible bandwidth allocation for each link, and flows can preempt each other. This paper primarily investigates coflow scheduling issues on packet switches.

In this paper, our focus revolves around the following problem. We are provided with a set of $n$ coflows, where each coflow $k$ is characterized by a demand matrix 
$D^{(k)}=\left(d_{i,j,k}\right)_{i,j=1}^{N}$, a weight $w_{k}$, and a release time $r_k$. The objective is to schedule these coflows on a set of $m$ network cores preemptively, aiming to minimize $\sum w_{k}C_{k}$, where $C_{k}$ represents the completion time of coflow $k$ in the schedule. The network core environment specifically considered in this paper is the uniformly related case. Each network core $p$ operates at a speed $s_{p}$, and transmitting each flow $(i, j, k)$ through network core $p$ takes $d_{i,j,k}/s_{p}$ time. We adopt the $\alpha | \beta | \gamma$ scheduling notation introduced by Graham \textit{et al.}~\cite{GRAHAM1979287} to denote the coflow scheduling problem. This problem is denoted by $Q|r_{k},pmtn|\sum w_k C_k$. When considering zero release time, this problem can be represented as $Q|pmtn|\sum w_k C_k$. Additionally, we also aim to minimize the makespan (maximum completion time), denoted as $Q|pmtn|C_{max}$.

\subsection{Related Work}
Chowdhury and Stoica first introduced the abstract concept of coflow to characterize communication patterns within a data center~\cite{Chowdhury2012}. The coflow scheduling problem has been proven to be strongly $\mathcal{NP}$-hard. Since the concurrent open shop problem can be reduced to the coflow scheduling problem, based on the inapproximability of the concurrent open shop problem, approximating the performance of coflow scheduling problem is $\mathcal{NP}$-hard to surpass $2-\epsilon$~\cite{Bansal2010, Sachdeva2013}.

Qiu \textit{et al.}~\cite{Qiu2015} pioneered a groundbreaking constant approximation algorithm. This algorithm partitions coflows into disjoint groups, treating each group as a single coflow and assigning it to a specific time interval. They achieved a deterministic approximation ratio of $\frac{64}{3}$ and a randomized approximation ratio of $(8+\frac{16\sqrt{2}}{3})$. When coflows are released at arbitrary times, the algorithm achieved a deterministic approximation ratio of $\frac{67}{3}$ and a randomized approximation ratio of $(9+\frac{16\sqrt{2}}{3})$. However, Ahmadi \textit{et al.}~\cite{ahmadi2020scheduling} demonstrated that the deterministic approximation ratio of this algorithm for coflow scheduling with release time is actually $\frac{76}{3}$.

Khuller \textit{et al.}~\cite{khuller2016brief} utilize an approximate algorithm for the concurrent open shop problem to establish the scheduling order of coflows. This algorithm yields approximation ratios of 12 and 8 for coflow with arbitrary release times and zero release times, respectively. Additionally, when all coflows have zero release times, they achieve a randomized approximation ratio of $3+2\sqrt{2} \approx 5.83$. Shafiee and Ghaderi~\cite{Shafiee2017} further enhance the randomized approximation ratios to $3e$ and $2e$ for coflows with arbitrary release times or zero release times, respectively.

Shafiee and Ghaderi~\cite{shafiee2018improved} achieved approximation ratios of 5 and 4 for arbitrary and zero release times, respectively, which are currently known as the best results. Additionally, Ahmadi \textit{et al.}~\cite{ahmadi2020scheduling} utilized a primal-dual algorithm to reduce time complexity and obtained identical results.

Chen~\cite{chen2023efficient1} addresses the coflow scheduling problem in identical parallel networks. For the flow-level scheduling problem, he proposes an approximation algorithm that achieves approximation ratios of $6-\frac{2}{m}$ and $5-\frac{2}{m}$ for arbitrary and zero release times, respectively. Additionally, for the coflow-level scheduling problem, he introduces an approximation algorithm that achieves approximation ratios of $4m+1$ and $4m$ for arbitrary and zero release times, respectively.

Chen~\cite{chen2023efficient2} also investigates the coflow scheduling problem with precedence constraints in identical parallel networks. When considering workload sizes and weights dependent on the network topology in the input instances, he achieves approximation ratios of $O(\chi)$ and $O(m\chi)$ for flow-level and coflow-level scheduling problems, respectively, where $\chi$ is the coflow number of the longest path in the directed acyclic graph (DAG). Additionally, when considering topology-dependent workload sizes, he achieves approximation ratios of $O(R\chi)$ and $O(Rm\chi)$ for flow-level and coflow-level scheduling problems, respectively, where $R$ represents the ratio of maximum weight to minimum weight. For coflows in the multi-stage job scheduling problem, he achieves an approximation ratio of $O(\chi)$.

In related studies, Chen~\cite{CHEN2023104752} developed two polynomial-time approximation algorithms for heterogeneous parallel networks. For the flow-level scheduling problem, the algorithm achieves an approximation ratio of $O(\log m/ \log \log m)$ for arbitrary release times. Meanwhile, for the coflow-level scheduling problem, the algorithm achieves an approximation ratio of $O\left(m\left(\log m/ \log \log m\right)^2\right)$ for arbitrary release times.

\begin{table*}[!ht]
\caption{Theoretical Results}
%\vspace{3mm}
\centering
\begin{tabular}{cccc}
\hline
 Case                   & Best known                       & This paper & \\ \hhline{====}
 $Q|r_k,pmtn|\sum w_k C_k$  &   $O(\log m/ \log \log m)$~\cite{CHEN2023104752}  &    $3 + \epsilon$  & Thm~\ref{thm:thm5}   \\ 
 $Q|pmtn|\sum w_k C_k$      &   $O(\log m/ \log \log m)$~\cite{CHEN2023104752}  &    $2 + \epsilon$  & Thm~\ref{thm:thm5}    \\
 $Q|pmtn|C_{max}$           &   $O(\log m/ \log \log m)$~\cite{CHEN2023104752}  &    $2 + \epsilon$  & Thm~\ref{thm:thm6}   \\ 
 $P|r_k,pmtn|\sum w_k C_k$  &   $6-\frac{2}{m}$~\cite{chen2023efficient1}       &    $3 + \epsilon$       & Thm~\ref{thm:thm5}\\ 
 $P|pmtn|\sum w_k C_k$      &   $5-\frac{2}{m}$~\cite{chen2023efficient1}       &    $2 + \epsilon$       & Thm~\ref{thm:thm5} \\
\hline
\end{tabular}
\label{tab:results}
\end{table*}

\subsection{Our Contributions}
This paper is dedicated to addressing the coflow scheduling problem in heterogeneous parallel networks and presents a variety of algorithms along with their corresponding results. The specific contributions of this study are outlined below:

\begin{itemize}
\item We present a randomized approximation algorithm for minimizing the total weighted completion time in heterogeneous parallel networks, achieving expected approximation ratios of $3 + \epsilon$ and $2 + \epsilon$ for the coflow scheduling problem with arbitrary and zero release times, respectively.

\item We present a deterministic approximation algorithm designed to minimize the total weighted completion time in heterogeneous parallel networks. This algorithm achieves approximation ratios of $3 + \epsilon$ and $2 + \epsilon$ for the coflow scheduling problem with arbitrary and zero release times, respectively. This result represents an improvement over the approximation ratio of $O(\log m/ \log \log m)$ reported in our previous paper~\cite{CHEN2023104752}. Furthermore, it outperforms the previous approximation ratios of $6-\frac{2}{m}$ and $5-\frac{2}{m}$ for arbitrary and zero release times, respectively, in identical parallel networks, as presented in~\cite{chen2023efficient1}.

\item We introduce a deterministic approximation algorithm aimed at minimizing the makespan in heterogeneous parallel networks. This algorithm achieves an approximation ratio of $2 + \epsilon$ for the coflow scheduling problem, representing an enhancement over the previously reported approximation ratio of $O(\log m/ \log \log m)$ in our prior work~\cite{CHEN2023104752}.
\end{itemize}

A summary of our theoretical findings is provided in Table~\ref{tab:results}.

\subsection{Organization}
The structure of this paper is outlined as follows. In Section~\ref{sec:Preliminaries}, we provide an introduction to fundamental notations and preliminary concepts that will be employed in subsequent sections. Section~\ref{sec:SOTA} offers an overview of the previous methods and our high-level ideas. The main algorithms are presented in the following sections: Section~\ref{sec:Algorithm3} proposes a randomized algorithm with interval-indexed linear programming relaxation for minimizing the total weighted completion time, and Section~\ref{sec:Algorithm4} elaborates on the deterministic algorithm for the same, and Section~\ref{sec:Algorithm5} proposes a deterministic algorithm to minimize the makespan. Finally, Section~\ref{sec:Conclusion} summarizes our findings and draws meaningful conclusions.

%%%%%%%%%%%%%%%%%%%%%%%%%%%%%%%%%%%%%%%%%%%%%%%%%%%%%%%%%%%%%%
% section Preliminaries
%%%%%%%%%%%%%%%%%%%%%%%%%%%%%%%%%%%%%%%%%%%%%%%%%%%%%%%%%%%%%%
\section{Notation and Preliminaries}\label{sec:Preliminaries}
Given a set of coflows $\mathcal{K}$ and a set of heterogeneous network cores $\mathcal{M}$, the coflow scheduling problem aims to minimize the total weighted completion time. Each coflow $k$ in $\mathcal{K}$ is characterized by its release time $r_k$ and a positive weight $w_k$. The objective of this paper is to determine a schedule that minimizes the total weighted completion time of the coflows, expressed as $\sum_{k\in \mathcal{K}} w_kC_k$. We also aim to minimize the makespan (maximum completion time), represented by $C_{max}=\max_{k\in \mathcal{K}} C_k $. The heterogeneous parallel networks can be abstracted as a set $\mathcal{M}$ of $m$ giant $N \times N$ non-blocking switches. Each switch, representing a network core, has $N$ input links connected to $N$ source servers and $N$ output links linked to $N$ destination servers. It is assumed that all links within a switch have the same capacity. 

Let $s_{p}$ denote the link speed of network core $p\in \mathcal{M}$, representing the data transmission rate per unit of time. Define $s_{min}=\min_{p\in \mathcal{M}}s_{p}$ as the minimum speed among the network cores, and $s_{max}=\max_{p\in \mathcal{M}}s_{p}$ as the maximum speed among the network cores. Each source server and destination server are equipped with $m$ simultaneous links connected to each network core. Let $\mathcal{I}=\left\{1,2,\ldots, N\right\}$ be the source server set and $\mathcal{J}=\left\{N+1, N+2,\ldots, 2N\right\}$ be the destination server set. The network core can be visualized as a bipartite graph, with $\mathcal{I}$ on one side and $\mathcal{J}$ on the other side.

Coflow is a collection of independent flows, and its completion time is determined by the completion time of the last flow in the collection. Let $D^{(k)}=\left(d_{i,j,k}\right)_{i,j=1}^{N}$ represent the demand matrix for coflow $k$, where $d_{i,j,k}$ is the size of the flow transmitted from input $i$ to output $j$ within coflow $k$. As the network core speed can be normalized, without loss of generality, it is assumed that $\frac{d_{ijk}}{s_{max}}\geq 1$ for all flows. Each flow is identified by a triple $(i, j, k)$, where $i \in \mathcal{I}$, $j \in \mathcal{J}$, and $k\in \mathcal{K}$. Additionally, we assume that flows are composed of discrete integer units of data. For simplicity, we assume that all flows within the coflow start simultaneously, as shown in~\cite{Qiu2015}.

For clarity in exposition, we ascribe distinct meanings to the same symbols using different subscript notations. Subscript $i$ denotes the index of the source (or the input port), subscript $j$ signifies the index of the destination (or the output port), and subscript $k$ represents the index of the coflow. Let $\mathcal{F}_{i}$ be the collection of flows with source $i$, defined as $\mathcal{F}_{i}=\left\{(i, j, k)| d_{i,j,k}>0, \forall k\in \mathcal{K}, \forall j\in \mathcal{J} \right\}$, and $\mathcal{F}_{j}$ be the set of flows with destination $j$, given by $\mathcal{F}_{j}=\left\{(i, j, k)| d_{i,j,k}>0, \forall k\in \mathcal{K}, \forall i\in \mathcal{I} \right\}$. We also define $\mathcal{F}_{k}=\left\{(i, j, k)| d_{i,j,k}>0, \forall i\in \mathcal{I}, \forall j\in \mathcal{J} \right\}$  as the set of flows associated with coflow $k$. Let $\mathcal{F}$ be the collection of all flows, represented by $\mathcal{F}=\left\{(i, j, k)| d_{i,j,k}>0, \forall k\in \mathcal{K}, \forall i\in \mathcal{I}, \forall j\in \mathcal{J} \right\}$.

The notation and terminology used in this paper are summarized in Table~\ref{tab:notations}.

%%%%%%%%%%%%%%%%%%%%%%%%%%%%%%%The rho%%%%%%%%%%%%%%%%%%%%%%%%%%%%
\begin{table}[ht]
\caption{Notation and Terminology}
%\vspace{2mm}
    \centering
        \begin{tabular}{||c|p{5in}||}
    \hline
     $N$      & The number of input/output ports.         \\
    \hline
		$\mathcal{M}$  & The set of network cores.\\
    \hline
		$m$       & The number of network cores.\\
    \hline
     $\mathcal{I}, \mathcal{J}$ & The source server set and the destination server set.         \\
    \hline    
     $\mathcal{K}$ & The set of coflows.         \\
    \hline
     $n$      & The number of coflows.         \\
    \hline
     $D^{(k)}$     & The demand matrix of coflow $k$. \\
    \hline    
     $d_{i,j,k}$     & The size of the flow to be transferred from input $i$ to output $j$ in coflow $k$.   \\
    \hline     
     $C_{i,j,k}$ & The completion time of flow $(i, j, k)$. \\
    \hline     
     $C_k$     & The completion time of coflow $k$.   \\
    \hline     
     $r_k$     & The released time of coflow $k$.  \\
    \hline     
     $w_{k}$   &  The weight of coflow $k$. \\
    \hline     
		 $\mathcal{F}$ & $\mathcal{F}=\left\{(i, j, k)| d_{i,j,k}>0, \forall k\in \mathcal{K}, \forall i\in \mathcal{I}, \forall j\in \mathcal{J} \right\}$ is the set of all flows. \\
    \hline     
		 $\mathcal{F}_{i}$ & $\mathcal{F}_{i}=\left\{(i, j, k)| d_{i,j,k}>0, \forall k\in \mathcal{K}, \forall j\in \mathcal{J} \right\}$ is the set of flows with source $i$. \\
		\hline 					
		 $\mathcal{F}_{j}$ & $\mathcal{F}_{j}=\left\{(i, j, k)| d_{i,j,k}>0, \forall k\in \mathcal{K}, \forall i\in \mathcal{I} \right\}$ is the set of flows with destination $j$. \\
		\hline 					
		 $\mathcal{F}_{k}$ & $\mathcal{F}_{k}=\left\{(i, j, k)| d_{i,j,k}>0, \forall i\in \mathcal{I}, \forall j\in \mathcal{J} \right\}$ is the set of flows with coflow $k$. \\
		\hline 				
		$T$, $\mathcal{T}$ & $T$ is the time horizon and $\mathcal{T}=\left\{0, 1, \ldots, T\right\}$ is the set of time indices. \\
		\hline 				
		$L_{ik}$, $L_{jk}$ & $L_{ik}=\sum_{j=1}^{N}d_{ijk}$ is the total amount of data that coflow $k$ needs to transmit through the input port $i$ and $L_{jk}=\sum_{i=1}^{N}d_{ijk}$ is the total amount of data that coflow $k$ needs to transmit through the output port $j$. \\
		\hline 						
		$L$, $\mathcal{L}$    & $L$ is the smallest value satisfying the inequality $(1+\eta)^L\geq T+1$ and $\mathcal{L}=\left\{0, 1, \ldots, L\right\}$ be the set of time interval indices. \\
		\hline 			
			$I_{\ell}$ & 		$I_{\ell}$ is the $\ell$th time interval where $I_{0}=[0, 1]$ and $I_{\ell}=((1+\eta)^{\ell-1},(1+\eta)^{\ell}]$ for $1\leq \ell \leq L$. \\
		\hline 			
		$|I_{\ell}|$ & $|I_{\ell}|$ is the length of the $\ell$th interval where $|I_{0}|=1$ and $|I_{\ell}|=\eta(1+\eta)^{\ell-1}$ for $1\leq \ell \leq L$. \\
		\hline 
        \end{tabular}
    \label{tab:notations}
\end{table}

%%%%%%%%%%%%%%%%%%%%%%%%%%%%%%%%%%%%%%%%%%%%%%%%%%%%%%%%%%%%%%
% section The algorithm
%%%%%%%%%%%%%%%%%%%%%%%%%%%%%%%%%%%%%%%%%%%%%%%%%%%%%%%%%%%%%%
\section{The Previous Methods and Our High-level Ideas}\label{sec:SOTA}
Chen~\cite{CHEN2023104752} initially presented a $O(\log m/ \log \log m)$-approximation algorithm to address the minimization of the makespan scheduling problem on heterogeneous network cores. Subsequently, he transformed the objective of minimizing the total weighted completion time into minimizing the makespan, albeit at the cost of a constant factor. The algorithm's underlying concept involves dividing time into distinct intervals and determining the assignment of coflows to each interval. Coflows scheduled within the same time interval are then treated as individual instances of the makespan scheduling problem, with the overarching goal of minimizing the total weighted completion time.

The approach proposed in this paper is inspiration from Schulz and Skutella~\cite{Schulz2002}. Initially, we employ an interval-indexed linear programming relaxation to address the coflow scheduling problem. Subsequently, the solution derived from linear programming is utilized to randomly determine the initiation times for transmissions. This temporal information is then employed to establish the transmission order of flows, and an analysis of the expected approximation ratio of this randomized algorithm is conducted. Following this, we apply a deterministic algorithm derived through the derandomized method.

In analyzing the coflow completion times, our focus is on the flow $(i, j, k)$ that concludes last within the coflow. In this specific flow, consideration must be given to all flows on input port $i$ or on output port $j$ at the same network core possessing a higher priority than $(i, j, k)$. Three distinct scenarios emerge. The initial scenario involves the simultaneous transmission of some flows on input port $i$ and output port $j$. The second scenario occurs when certain time intervals on ports $i$ and $j$ are concurrently idle. The third scenario arises when there is neither simultaneous transmission on input port $i$ and output port $j$ nor any overlapping idle time intervals. The first scenario is not deemed the worst-case situation. In the second scenario, the transmission of flow $(i, j, k)$ can take place during idle time intervals. The third scenario signifies the worst-case scenario, and thus our analysis is conducted based on this scenario.

\section{Randomized Approximation Algorithm for Minimizing the Total Weighted Completion Time}\label{sec:Algorithm3}
This section introduces interval-indexed linear programming to address the coflow scheduling problem in heterogeneous parallel networks. We initially propose a randomized approximation algorithm. Let 
\begin{eqnarray*}
T=\max_{k\in \mathcal{K}} r_{k}+\frac{1}{s_{min}}\left(\max_{i\in \mathcal{I}}\sum_{k\in \mathcal{K}} L_{ik}+\max_{j\in \mathcal{J}}\sum_{k\in \mathcal{K}} L_{jk}\right)-1.
\end{eqnarray*}
be the time horizon where $L_{ik}=\sum_{j=1}^{N}d_{ijk}$ is the total amount of data that coflow $k$ needs to transmit through the input port $i$ and $L_{jk}=\sum_{i=1}^{N}d_{ijk}$ is the total amount of data that coflow $k$ needs to transmit through the output port $j$. 

For a given positive parameter $\eta$, we choose the integer $L$ as the smallest value that satisfies the inequality $(1+\eta)^L \geq T+1$. Consequently, the value of $L$ is polynomially bounded in the input size of the scheduling problem under consideration. Let $\mathcal{L}=\left\{0, 1, \ldots, L\right\}$ denote the set of time interval indices. Consider the interval $I_{0}=[0, 1]$, and for $1\leq \ell \leq L$, define $I_{\ell}=((1+\eta)^{\ell-1},(1+\eta)^{\ell}]$. The length of the $\ell$th interval, denoted as $|I_{\ell}|$, is given by $|I_{\ell}|=\eta(1+\eta)^{\ell-1}$ for $1\leq \ell \leq L$. We introduce variables $y_{ijkp\ell}$ for $(i, j, k)\in \mathcal{F}$, subject to the constraint $(1+\eta)^{\ell-1}\geq r_{k}$. The expression $y_{ijkp\ell}\cdot |I_{\ell}|$ represents the time during which flow $(i, j, k)$ is transmitted within the time interval $I_{\ell}$ on network core $p$. Alternatively, $(s_{p}\cdot y_{ijkp\ell}\cdot |I_{\ell}|)/d_{ijk}$ represents the fraction of flow $(i, j, k)$ transmitted within $I_{\ell}$ on network core $p$. Consider the following linear program in these interval-indexed variables:

\begin{subequations}\label{coflow:interval}
\begin{align}
& \text{min}  && \sum_{k \in \mathcal{K}} w_{k} C_{k}     &   & \tag{\ref{coflow:interval}} \\
& \text{s.t.} && \sum_{p=1}^{m}\sum_{\substack{\ell=0\\ (1+\eta)^{\ell-1}\geq r_{k}}}^{L} \frac{s_{p}\cdot y_{ijkp\ell}\cdot |I_{\ell}|}{d_{ijk}} = 1, && \forall (i, j, k)\in \mathcal{F} \label{interval:a} \\
&  && \sum_{(i, j, k) \in \mathcal{F}_{i}} y_{ijkp\ell} \leq 1, && \forall i\in \mathcal{I}, \forall p\in \mathcal{M}, \forall \ell\in \mathcal{L} \label{interval:b} \\
&  && \sum_{(i, j, k) \in \mathcal{F}_{j}} y_{ijkp\ell} \leq 1, && \forall j\in \mathcal{J}, \forall p\in \mathcal{M}, \forall \ell\in \mathcal{L} \label{interval:c} \\
&  && C_{k}\geq C_{ijk}, && \forall k\in \mathcal{K}, \forall (i, j, k)\in \mathcal{F}_{k} \label{interval:d} \\
&  && y_{ijkp\ell} \geq 0,&& \forall (i, j, k)\in \mathcal{F}, \forall \ell\in \mathcal{L}\label{interval:g}
\end{align}
\end{subequations}
where
\begin{eqnarray}\label{interval:d2}
C_{ijk} = \sum_{p=1}^{m}\sum_{\substack{\ell=0\\ (1+\eta)^{\ell-1}\geq r_{k}}}^{L} f(i,j,k,p,\ell)
\end{eqnarray}
and 
\begin{eqnarray}\label{interval:d3}
f(i,j,k,p,\ell) = \left(\frac{s_{p}}{d_{ijk}}(1+\eta)^{\ell-1}+\frac{1}{2}\right)\cdot y_{ijkp\ell}\cdot |I_{\ell}|.
\end{eqnarray}
To simplify the notation in equation (\ref{interval:d3}), when $\ell=0$, we define the notation $(1+\eta)^{\ell-1}$ to be $1/2$. Constraint (\ref{interval:a}) ensures compliance with the transmission requirements for each flow. The switch capacity constraints (\ref{interval:b}) and (\ref{interval:c}) specify that network core $p$ can transmit only one flow at a time through each input port $i$ and output port $j$. Constraint (\ref{interval:d}) dictates that the completion time of coflow $k$ is restricted by the completion times of all its constituent flows. Furthermore, equation (\ref{interval:d2}) establishes the lower bound for the completion time of flow $(i, j, k)$, and this lower bound occurs during continuous transmission between $C_{ijk}-\frac{d_{ijk}}{s_{p}}$ and $C_{ijk}$ on network core $p$.

\begin{algorithm}
\caption{Randomized Interval-Indexed Coflow Scheduling}
    \begin{algorithmic}[1]
				\STATE Compute an optimal solution $y$ to linear programming (\ref{coflow:interval}). \label{alg3-2}
				\STATE For all flows $(i, j, k)\in \mathcal{F}$ assign flow $(i, j, k)$ to switch-interval pair $(p, \ell)$, where the switch-interval pair $(p, \ell)$ is chosen from the probability distribution that assigns flow $(i, j, k)$ to $(p, \ell)$ with probability $\frac{s_{p}\cdot y_{ijkp\ell}\cdot |I_{\ell}|}{d_{ijk}}$; set $t_{ijk}$ to the left endpoint of time interval $I_{\ell}$ and $\mathcal{A}_{p}=\mathcal{A}_{p}\cup (i, j, k)$. \label{alg3-3}
		    \FOR{each $p\in \mathcal{M}$ do in parallel} \label{alg3-4}
						\STATE wait until the first flow is released 
				    \WHILE{there is some incomplete flow}
								\FOR{every released and incomplete flow $(i, j, k)\in \mathcal{A}_{p}$ in non-decreasing order of $t_{ijk}$, breaking ties independently at random}
										\IF{the link $(i, j)$ is idle in switch $p$}
												\STATE schedule flow $(i, j, k)$\label{alg3-1}
										\ENDIF
								\ENDFOR
						    \WHILE{no new flow is completed or released}
						        \STATE transmit the flows that get scheduled in line \ref{alg3-1} at maximum rate $s_p$.
						    \ENDWHILE
				    \ENDWHILE
				\ENDFOR  \label{alg3-5}
   \end{algorithmic}
\label{Alg3}
\end{algorithm}

The algorithm introduced in Algorithm~\ref{Alg3}, referred to as the randomized interval-indexed coflow scheduling algorithm, employs a list scheduling rule. In lines \ref{alg3-2}-\ref{alg3-3}, the start transmission time and allocation of network cores for each flow is determined randomly, leveraging the solution obtained from the linear programming (\ref{coflow:interval}). In lines \ref{alg3-4}-\ref{alg3-5}, the scheduling of flows at each network core follows a non-decreasing order based on $t_{ijk}$, with ties being broken independently at random.

%%%%%%%%%%%%%%%%%%%%%%%%%%%%%%%%%%%%%%%%%%%%%%%%%%%%%%%%%%%%%%%%%%%%%%%%%%%%%%%%%%%%%%%%%%%%%%%%%%%%%%%%%%%%%%%%%%%%%%%%%%%%%%%%%%%%%%%%%%%%%%%%%%%%
\begin{thm}\label{thm:thm4}
In the schedule generated by Algorithm~\ref{Alg3}, the expected completion time for each coflow $k$ is bounded by $3(1+\frac{\eta}{3})C_{k}^{*}$ for the coflow scheduling problem with release times. Additionally, for the coflow scheduling problem without release times, the expected completion time for each coflow $k$ is bounded by $2(1+\frac{\eta}{2})C_{k}^{*}$. Here, $C_{k}^{*}$ represents the solution derived from the linear programming formulation (\ref{coflow:interval}).
\end{thm}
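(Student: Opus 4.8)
The plan is to analyze the completion time of an arbitrary coflow $k$ by examining its last-finishing flow $(i,j,k)$, and to bound the time at which that flow completes in the list schedule produced by lines \ref{alg3-4}--\ref{alg3-5}. Fix the switch-interval assignment produced in line \ref{alg3-3}, so that each flow $(i,j,k)$ has a random start parameter $t_{ijk}$ (the left endpoint of its chosen interval $I_\ell$) and is assigned to a random core $p$. I would first argue, conditioning on these random choices, that when flow $(i,j,k)$ is scheduled on core $p$ it can be delayed only by flows that precede it in the $t$-ordering and that contend for the same input port $i$ or the same output port $j$ on that same core. Following the high-level discussion preceding this section, the worst case is the third scenario, where the delay on flow $(i,j,k)$ accumulates from the backlog on port $i$ plus the backlog on port $j$. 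The completion time is therefore at most $\max(t_{ijk},r_k)$ plus the total processing time (on core $p$, i.e.\ divided by $s_p$) of all higher-priority flows sharing input $i$ or output $j$, namely
\begin{eqnarray*}
C_{ijk} \leq \max(t_{ijk},r_k) + \frac{1}{s_p}\!\!\sum_{\substack{(i',j',k'):\, t_{i'j'k'}\leq t_{ijk}}}\!\!\!\! d_{i'j'k'}\,\mathbf{1}[i'=i \text{ or } j'=j,\ (i',j',k')\to p].
\end{eqnarray*}

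Next I would take expectations over the random switch-interval assignment. The key is that the probability that any flow $(i',j',k')$ is routed to core $p$ and assigned to an interval with left endpoint at most $t_{ijk}$ is governed exactly by the linear-programming variables $y_{i'j'k'p\ell}$, whose partial sums are controlled by the capacity constraints \eqref{interval:b} and \eqref{interval:c}. Summing the expected backlog over the input-port flows and invoking constraint \eqref{interval:b} (and symmetrically \eqref{interval:c} for the output port) lets me bound each of the two contention sums by a geometric-type sum in the interval lengths, which telescopes to a constant multiple of $(1+\eta)^{\ell}$ where $I_\ell$ is the interval containing $t_{ijk}$. That geometric total must then be matched against the LP completion-time lower bound encoded in \eqref{interval:d2}--\eqref{interval:d3}, where $f(i,j,k,p,\ell)$ contributes the term $\frac{s_p}{d_{ijk}}(1+\eta)^{\ell-1}+\frac12$ weighted by $y_{ijkp\ell}|I_\ell|$; this is the quantity that makes $C^*_{ijk}$ a genuine lower bound on the transmission time and is what the factors $(1+\eta)$ are calibrated against.

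The accounting then splits into the three additive contributions: the release-time term $\max(t_{ijk},r_k)$, the input-port backlog, and the output-port backlog. The release-time term is bounded by $C^*_k$ directly because the LP only allows $y_{ijkp\ell}>0$ when $(1+\eta)^{\ell-1}\geq r_k$, so $r_k$ never exceeds the relevant left endpoint; in the zero-release-time case this term drops, which is precisely where the improvement from $3+\epsilon$ to $2+\epsilon$ comes from. Each of the two backlog terms should contribute roughly one factor of $C^*_k$ after using the capacity constraints and the $f$-based lower bound, and the $\frac{\eta}{3}$ and $\frac{\eta}{2}$ slack absorbs the overshoot from replacing interval midpoints by right endpoints $(1+\eta)^\ell$ versus the $(1+\eta)^{\ell-1}$ appearing in \eqref{interval:d3}. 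Adding the three pieces and using $C^*_k\geq C^*_{ijk}$ from \eqref{interval:d} yields the claimed bounds $3(1+\frac\eta3)C^*_k$ and $2(1+\frac\eta2)C^*_k$.

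The main obstacle I anticipate is the expectation argument for the combined port contention: I must show that simultaneously bounding the input-port and output-port backlogs does not double-count, and that the independence (or lack thereof) of the random interval assignments across distinct flows is handled correctly so that the per-core capacity constraints \eqref{interval:b} and \eqref{interval:c} can be applied to the \emph{expected} backlog. Concretely, the delicate step is converting a probabilistic statement about which flows land in core $p$ at or before interval $\ell$ into the deterministic LP sum $\sum_{\ell'\leq \ell}\sum_{(i',j',k')\in\mathcal{F}_i} y_{i'j'k'p\ell'}|I_{\ell'}|$ and then telescoping it against the geometric growth of the $|I_{\ell'}|$; getting the constant in front of this telescoped sum to be exactly $1$ (rather than something larger) is what forces the precise choice of the midpoint term $\frac12$ and the endpoint convention $(1+\eta)^{\ell-1}$ in \eqref{interval:d3}, and verifying that calibration is where the real work lies.
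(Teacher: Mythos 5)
Your route is essentially the paper's: analyze the last-finishing flow $(i,j,k)$, condition on its switch-interval pair $(p,\ell)$, bound the conditional expectation of the higher-priority backlog on input port $i$ and output port $j$ of core $p$ through the LP fractions $\frac{s_{p}y_{i'j'k'pt}|I_{t}|}{d_{i'j'k'}}$, apply the capacity constraints (\ref{interval:b})--(\ref{interval:c}) interval by interval, telescope the geometric interval lengths, and close with total expectation against $f(i,j,k,p,\ell)$ and constraint (\ref{interval:d}). The genuine gap is in your zero-release-time accounting. Your pathwise inequality charges $\max(t_{ijk},r_{k})$, which equals $t_{ijk}$ when $r_{k}=0$, and $t_{ijk}$ --- the left endpoint of the LP-chosen interval --- does \emph{not} vanish when release times are zero, since the LP solution may place mass of flow $(i,j,k)$ in late intervals. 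So as written your decomposition yields $3(1+\frac{\eta}{3})$ in both cases, and the claimed improvement to $2(1+\frac{\eta}{2})$ does not follow. The paper's mechanism is different: the additive term is $\tau$, the start of the busy period on ports $i,j$ of core $p$ in (\ref{thm4:eq1}); if $\tau>0$, some flow must have been unavailable just before $\tau$ because it was unreleased, whence $\tau\leq r_{k'}\leq t_{i'j'k'}\leq t_{ijk}$, and with all release times zero one gets $\tau=0$, so the additive start term disappears entirely --- that is exactly where the factor $2$ comes from. You need this busy-period argument anyway, because your claim that only earlier-$t$ flows delay $(i,j,k)$ requires justification: a later-$t$ flow can occupy port $i$ while $(i,j,k)$ is blocked on port $j$, and the correct statement is that whenever $(i,j,k)$ is available but not transmitting, at least one of its two ports carries a \emph{higher-priority} flow, which is what permits charging only earlier-$t$ flows over $(\tau,C_{k}]$.

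A secondary but real omission: you never use the random tie-breaking, and the exact constants of Theorem~\ref{thm:thm4} require it. For a contending flow assigned to the \emph{same} interval $I_{\ell}$ as $(i,j,k)$, the probability of having priority is $\frac{1}{2}\cdot\frac{s_{p}y_{i'j'k'p\ell}|I_{\ell}|}{d_{i'j'k'}}$ (the $\frac{1}{2}$ in the paper's $Pr_{\ell,p}$), so same-interval contention contributes $|I_{\ell}|$ rather than $2|I_{\ell}|$ per port pair; together with $\sum_{t=0}^{\ell-1}|I_{t}|=(1+\eta)^{\ell-1}$ this gives the total $(3+\eta)(1+\eta)^{\ell-1}=3(1+\frac{\eta}{3})(1+\eta)^{\ell-1}$, and analogously $2(1+\frac{\eta}{2})$ without release times. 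Without that $\frac{1}{2}$ you obtain $3(1+\frac{2\eta}{3})$ --- the constant of the deterministic Theorem~\ref{thm:thm5}, not of Theorem~\ref{thm:thm4} (absorbable by rescaling $\eta$, but the statement you are proving pins the constants). Note also that the $\frac{1}{2}$ you flag in (\ref{interval:d3}) plays a different role from the tie-breaking factor: it is absorbed only at the very end via $\frac{d_{ijk}}{s_{p}}\leq 3(1+\frac{\eta}{3})\cdot\frac{1}{2}\frac{d_{ijk}}{s_{p}}$, so your ``calibration'' worry conflates two distinct halves.
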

\begin{proof}
Assuming the last completed flow within coflow $k$ is denoted as $(i, j, k)$, we have $C_{k}=C_{ijk}$. Consider an arbitrary yet fixed flow $(i, j, k)\in \mathcal{F}$, where $(p, \ell)$ signifies the switch-interval assigned to flow $(i, j, k)$. Let $\tau\geq 0$ be the earliest point in time such that there is no idle time in the constructed schedule during the interval $(\tau, C_{k}]$ on input port $i$ or output port $j$ of network core $p$. Define $\mathcal{P}_{i}$ as the set of flows transmitted on input port $i$ during this time interval, and let $\mathcal{P}_{j}$ be the set of flows transmitted on output port $j$ during the same interval. We have
\begin{eqnarray}\label{thm4:eq1}
C_{k}-\tau \leq \sum_{(i', j', k')\in \mathcal{P}_{i}\cup \mathcal{P}_{j}} \frac{d_{i'j'k'}}{s_{p}}.
\end{eqnarray}

Since all flows $(i', j', k')\in \mathcal{P}_{i}\cup \mathcal{P}_{j}$ are started no later than flow $(i, j, k)$, their assigned times must satisfy $t_{i'j'k'}\leq t_{ijk}$. Specifically, for all $(i', j', k')\in \mathcal{P}_{i}\cup \mathcal{P}_{j}$, it is established that $r_{k'}\leq t_{i'j'k'}\leq t_{ijk}$. When combined with (\ref{thm4:eq1}), this deduction leads to the inequality
\begin{eqnarray*}\label{thm4:eq2}
C_{k} \leq t_{ijk}+\sum_{(i', j', k')\in \mathcal{P}_{i}\cup \mathcal{P}_{j}} \frac{d_{i'j'k'}}{s_{p}}.
\end{eqnarray*}

When analyzing the expected completion time $E[C_{k}]$ for coflow $k$, we begin by keeping the assignment of flow $(i, j, k)$ to time interval $I_{0}$ and network core $p$ constant. We then establish an upper bound on the conditional expectation $E_{\ell=0, p}[C_{k}]$:
\begin{eqnarray}\label{thm4:eq3}
 E_{\ell=0, p}[C_{k}] & \leq & E_{\ell=0, p}\left[\sum_{(i', j', k')\in \mathcal{P}_{i}\cup \mathcal{P}_{j}} \frac{d_{i'j'k'}}{s_{p}}\right] \notag\\
							& \leq & \frac{d_{ijk}}{s_{p}}  \notag\\
					    &      & +\sum_{(i', j', k')\in \mathcal{P}_{i}\setminus \left\{(i, j, k)\right\}} \frac{d_{i'j'k'}}{s_{p}} \cdot Pr_{\ell=0, p}(i', j', k') \notag\\
					    &      & +\sum_{(i', j', k')\in \mathcal{P}_{j}\setminus \left\{(i, j, k)\right\}} \frac{d_{i'j'k'}}{s_{p}} \cdot Pr_{\ell=0, p}(i', j', k') \notag\\
              & \leq & \frac{d_{ijk}}{s_{p}} +\sum_{(i', j', k')\in \mathcal{P}_{i}\setminus \left\{(i, j, k)\right\}} \frac{1}{2}y_{i'j'k'p0}\cdot |I_{0}|\notag\\
					    &      & +\sum_{(i', j', k')\in\mathcal{P}_{j}\setminus \left\{(i, j, k)\right\}}\frac{1}{2}y_{i'j'k'p0}\cdot |I_{0}| \notag\\
							& \leq & \frac{d_{ijk}}{s_{p}} + |I_{0}| \notag\\
							& =    & 3\left(1+\frac{\eta}{3}\right)\left(\frac{\frac{d_{ijk}}{s_{p}} + 1}{3(1+\frac{\eta}{3})}\right) \notag\\
							& \leq & 3\left(1+\frac{\eta}{3}\right)\left(\frac{1}{2}+\frac{1}{2}\frac{d_{ijk}}{s_{p}}\right) \notag\\
							& =    & 3\left(1+\frac{\eta}{3}\right)\left((1+\eta)^{\ell-1}+\frac{1}{2}\frac{d_{ijk}}{s_{p}}\right).
\end{eqnarray}
where $Pr_{\ell=0, p}(i', j', k')=\frac{1}{2}\frac{s_{p}\cdot y_{i'j'k'p\ell}\cdot |I_{\ell}|}{d_{i'j'k'}}$ represents the probability of $(i', j', k')$ on network core $p$ prior to the occurrence of $(i, j, k)$. It is important to observe that the factor $\frac{1}{2}$ preceding the term $\frac{s_{p}\cdot y_{i'j'k'p\ell}\cdot |I_{\ell}|}{d_{i'j'k'}}$ is introduced to account for random tie-breaking. Additionally, it is worth recalling that the notation $(1+\eta)^{\ell-1}$ is defined as $\frac{1}{2}$ when $\ell=0$. Keeping the assignment of flow $(i, j, k)$ to time interval $I_{\ell}$ and network core $p$ constant, we proceed to establish an upper bound on the conditional expectation $E_{\ell, p}[C_{k}]$:
\begin{eqnarray}\label{thm4:eq4}
E_{\ell, p}[C_{k}]  & \leq & (1+\eta)^{\ell-1}+E_{\ell, p}\left[\sum_{(i', j', k')\in \mathcal{P}_{i}\cup \mathcal{P}_{j}} \frac{d_{i'j'k'}}{s_{p}}\right] \notag\\
              & \leq & (1+\eta)^{\ell-1}+\frac{d_{ijk}}{s_{p}}  \notag\\
					    &      & +\sum_{(i', j', k')\in \mathcal{P}_{i}\setminus \left\{(i, j, k)\right\}} \frac{d_{i'j'k'}}{s_{p}} \cdot Pr_{\ell, p}(i', j', k') \notag\\
					    &      & +\sum_{(i', j', k')\in\mathcal{P}_{j}\setminus \left\{(i, j, k)\right\}} \frac{d_{i'j'k'}}{s_{p}} \cdot Pr_{\ell, p}(i', j', k') \notag\\
              & \leq & (1+\eta)^{\ell-1}+\frac{d_{ijk}}{s_{p}} +2 \sum_{t=r_{k'}}^{\ell-1}|I_{t}|+|I_{\ell}| \notag\\
							& \leq & 3\left(1+\frac{\eta}{3}\right)(1+\eta)^{\ell-1}+\frac{d_{ijk}}{s_{p}} \notag\\
							& \leq & 3\left(1+\frac{\eta}{3}\right)\left((1+\eta)^{\ell-1}+\frac{1}{2}\frac{d_{ijk}}{s_{p}}\right)
\end{eqnarray}
where $Pr_{\ell, p}(i', j', k')=\sum_{t=r_{k'}}^{\ell-1}\frac{s_{p}\cdot y_{i'j'k'pt}\cdot |I_{t}|}{d_{i'j'k'}}+\frac{1}{2}\frac{s_{p}\cdot y_{i'j'k'p\ell}\cdot |I_{\ell}|}{d_{i'j'k'}}$.

Finally, applying the formula of total expectation to eliminate conditioning results in inequalities (\ref{thm4:eq3}) and (\ref{thm4:eq4}). We have
\begin{eqnarray*}\label{thm4:eq5}
E[C_{k}]  & =    & \sum_{p=1}^{m}\sum_{\substack{\ell=0\\ (1+\eta)^{\ell-1}\geq r_{k}}}^{L} \frac{s_{p}\cdot y_{ijkp\ell}\cdot |I_{\ell}|}{d_{ijk}} E_{\ell, p}[C_{k}] \\
& \leq & 3 \left(1+\frac{\eta}{3}\right) \sum_{p=1}^{m}\sum_{\substack{\ell=0\\ (1+\eta)^{\ell-1}\geq r_{k}}}^{L} f(i,j,k,p,\ell)
\end{eqnarray*}
for the coflow scheduling problem with release times. We also can obtain
\begin{eqnarray*}\label{thm4:eq6}
 E[C_{k}] & \leq & 2 \left(1+\frac{\eta}{2}\right) \sum_{p=1}^{m}\sum_{\substack{\ell=0\\ (1+\eta)^{\ell-1}\geq r_{k}}}^{L} f(i,j,k,p,\ell)
\end{eqnarray*}
for the coflow scheduling problem without release times. This together with constraints (\ref{interval:d}) yields the theorem.
\end{proof}

We can choose $\eta=\epsilon$ for any given positive parameter $\epsilon>0$. Algorithm~\ref{Alg3} achieves expected approximation ratios of $3+\epsilon$ and $2+\epsilon$ in the cases of arbitrary and zero release times, respectively.

%%%%%%%%%%%%%%%%%%%%%%%%%%%%%%%%%%%%%%%%%%%%%%%%%%%%%%%%%%%%%%%%%%%%%%%%%%%%%%%%%%%%%%%%%%%%%%%%%%%%%%%%%%%%%%%%%%%%%%%%%%%%%%%
\section{Deterministic Approximation Algorithm for Minimizing the Total Weighted Completion Time}\label{sec:Algorithm4}
This section introduces a deterministic algorithm with a bounded worst-case ratio. Algorithm~\ref{Alg4} is derived from Algorithm~\ref{Alg3}. In lines~\ref{alg4-2}-\ref{alg4-3}, we select a switch-interval pair $(p, \ell)$ for each flow to minimize the expected total weighted completion time. In lines \ref{alg4-4}-\ref{alg4-5}, the scheduling of flows at each network core follows a non-decreasing order based on $t_{ijk}$, with ties being broken ties with smaller indices. We say $ (i', j', k') < (i, j, k)$ if $k' < k$, or $j' < j$ and $k' = k$, or $i' < i$ and $j' = j$ and $k' = k$.

\begin{algorithm}
\caption{Deterministic Interval-Indexed Coflow Scheduling}
    \begin{algorithmic}[1]
				\STATE Compute an optimal solution $y$ to linear programming (\ref{coflow:interval}).
				\STATE Set $\mathcal{P}=\emptyset$; $x=0$; \label{alg4-2}
				\FOR{all $(i, j, k)\in F$} 
					\STATE for all possible assignments of $(i, j, k)\in \mathcal{F}\setminus \mathcal{P}$ to switch-interval pair $(p, \ell)$ compute $E_{\mathcal{P}\cup \left\{(i, j, k)\right\}, x}[\sum_{q}w_{q} C_{q}]$; 
					\STATE Determine the switch-interval pair $(p, \ell)$ that minimizes the conditional expectation $E_{\mathcal{P}\cup \left\{(i, j, k)\right\}, x}[\sum_{q}w_{q} C_{q}]$
					\STATE Set $\mathcal{P}=\mathcal{P}\cup \left\{(i, j, k)\right\}$; $x_{ijkp\ell}=1$; set $t_{ijk}$ to the left endpoint of time interval $I_{\ell}$; set $\mathcal{A}_{p}=\mathcal{A}_{p}\cup (i, j, k)$;
				\ENDFOR \label{alg4-3}
		    \FOR{each $p\in \mathcal{M}$ do in parallel} \label{alg4-4}
						\STATE wait until the first flow is released 
				    \WHILE{there is some incomplete flow}
								\FOR{every released and incomplete flow $(i, j, k)\in \mathcal{A}_{p}$ in non-decreasing order of $t_{ijk}$, breaking ties with smaller indices}
										\IF{the link $(i, j)$ is idle in switch $p$}
												\STATE schedule flow $(i, j, k)$\label{alg4-1}
										\ENDIF
								\ENDFOR
						    \WHILE{no new flow is completed or released}
						        \STATE transmit the flows that get scheduled in line \ref{alg4-1} at maximum rate $s_p$.
						    \ENDWHILE
				    \ENDWHILE
				\ENDFOR  \label{alg4-5}				
   \end{algorithmic}
\label{Alg4}
\end{algorithm}

Let 
\begin{eqnarray*}
C(i,j,k,p,0) & = & \frac{d_{ijk}}{s_{p}} +\sum_{(i, j', k')<(i, j, k)} y_{ij'k'p0} +\sum_{(i', j, k')<(i, j, k)} y_{i'jk'p0}
\end{eqnarray*}
be conditional expectation completion time of flow $(i, j, k)$ to which flow $(i, j, k)$ has been assigned by $\ell=0$ on network core $p$ and let 
\begin{eqnarray*}
C(i,j,k,p,\ell) & = & (1+\eta)^{\ell-1}+\frac{d_{ijk}}{s_{p}} \\
           &   & +\sum_{(i', j', k')\in \mathcal{F}_{i}\setminus \left\{(i, j, k)\right\}} \sum_{\substack{t=0\\ (1+\eta)^{t-1}\geq r_{k'}}}^{\ell-1}y_{i'j'k'pt}\cdot |I_{t}|\\
					 &   & +\sum_{(i, j', k')<(i, j, k)} y_{ij'k'p\ell} \cdot |I_{\ell}|\\
					 &   & +\sum_{(i', j', k')\in\mathcal{F}_{j}\setminus \left\{(i, j, k)\right\}} \sum_{\substack{t=0\\ (1+\eta)^{t-1}\geq r_{k'}}}^{\ell-1} y_{i'j'k'pt}\cdot |I_{t}| \\
					 &   & +\sum_{(i', j, k')<(i, j, k)} y_{i'jk'p\ell}\cdot |I_{\ell}|
\end{eqnarray*}
be conditional expectation completion time of flow $(i, j, k)$ to which flow $(i, j, k)$ has been assigned by $\ell>0$ on network core $p$. The expected completion time of coflow $k$ in the schedule output by Algorithm~\ref{Alg4} is
\begin{eqnarray*}
E[C_{k}] & = & \max_{(i, j, k)\in \mathcal{F}_{k}} \left\{\sum_{p=1}^{m}\sum_{\substack{\ell=0\\ (1+\eta)^{\ell-1}\geq r_{k}}}^{L} C_2(i,j,k,p,\ell)\right\}
\end{eqnarray*}
where 
\begin{eqnarray*}
C_2(i,j,k,p,\ell) & = & \frac{s_{p}\cdot y_{ijkp\ell}\cdot |I_{\ell}|}{d_{ijk}} C(i,j,k,p,\ell).
\end{eqnarray*}

Let $\mathcal{P}_{i}\subseteq F_{i}$ and $\mathcal{P}_{j}\subseteq F_{j}$ represent subsets of flows that have already been assigned the time interval on network core $p$. For each flow $(i', j', k')\in \mathcal{P}_{i} \cup \mathcal{P}_{j}$, let the 0/1-variable $x_{i'j'k'p\ell}$ for $(1+\eta)^{\ell-1} \geq r_{k'}$ indicate whether $(i', j', k')$ has been assigned to the time interval $I_{\ell}$ on network core $p$ (i.e., $x_{i'j'k'p\ell}=1$) or not ($x_{i'j'k'p\ell}=0$). This allows us to formulate the following expressions for the conditional expectation of the completion time of $(i, j, k)$. 
Let
\begin{eqnarray*}
D(i,j,k,p,0) & = & \frac{d_{ijk}}{s_{p}}+\sum_{\substack{(i', j', k')\in \mathcal{P}_{i}\cup  \mathcal{P}_{j}\\ (i', j', k')<(i, j, k)}} x_{i'j'k'p,0}\frac{d_{i'j'k'}}{s_{p}} \\
					 &   & +\sum_{\substack{(i', j', k')\in \\ (\mathcal{F}_{i}\cup \mathcal{F}_{j})\setminus \left\{\mathcal{P}_{i}\cup \mathcal{P}_{j}\cup (i, j, k)\right\}\\ (i', j', k')<(i, j, k)}} y_{i'j'k'p0}
\end{eqnarray*}
be conditional expectation completion time of flow $(i, j, k)$ to which flow $(i, j, k)$ has been assigned by $\ell=0$ on network core $p$ when $(i, j, k)\notin \mathcal{P}_{i}\cup \mathcal{P}_{j}$ and let
\begin{eqnarray*}
D(i,j,k,p,\ell)  & = & (1+\eta)^{\ell-1}+\frac{d_{ijk}}{s_{p}} \\
           &   & +\sum_{(i', j', k')\in \mathcal{P}_{i}\cup  \mathcal{P}_{j}} \sum_{\substack{t=0\\ (1+\eta)^{t-1}\geq r_{k'}}}^{\ell-1}x_{i'j'k'pt}\frac{d_{i'j'k'}}{s_{p}} \\
					 &   & +\sum_{\substack{(i', j', k')\in \mathcal{P}_{i}\cup  \mathcal{P}_{j}\\ (i', j', k')<(i, j, k)}} x_{i'j'k'p\ell}\frac{d_{i'j'k'}}{s_{p}} \\
					 &   & +\sum_{\substack{(i', j', k')\in \\ (\mathcal{F}_{i}\cup \mathcal{F}_{j})\setminus \left\{\mathcal{P}_{i}\cup \mathcal{P}_{j}\cup (i, j, k)\right\}}} \sum_{\substack{t=0\\ (1+\eta)^{t-1}\geq r_{k'}}}^{\ell-1} y_{i'j'k'pt}\cdot |I_{t}| \\
					 &   & +\sum_{\substack{(i', j', k')\in \\ (\mathcal{F}_{i}\cup \mathcal{F}_{j})\setminus \left\{\mathcal{P}_{i}\cup \mathcal{P}_{j}\cup (i, j, k)\right\}\\ (i', j', k')<(i, j, k)}} y_{i'j'k'p\ell}\cdot |I_{\ell}|
\end{eqnarray*}
be conditional expectation completion time of flow $(i, j, k)$ to which flow $(i, j, k)$ has been assigned by $\ell>0$ on network core $p$ when $(i, j, k)\notin \mathcal{P}_{i}\cup \mathcal{P}_{j}$.

Let
\begin{eqnarray*}
E(i,j,k,p, 0) & = & \frac{d_{ijk}}{s_{p}}+\sum_{\substack{(i', j', k')\in \mathcal{P}_{i}\cup  \mathcal{P}_{j}\\ (i', j', k')<(i, j, k)}} x_{i'j'k'0}\frac{d_{i'j'k'}}{s_{p}} \\
					 &   & +\sum_{\substack{(i', j', k')\in \\(\mathcal{F}_{i}\cup \mathcal{F}_{j})\setminus \left\{\mathcal{P}_{i}\cup \mathcal{P}_{j}\right\} \\ (i', j', k')<(i, j, k)}} y_{i'j'k'0}
\end{eqnarray*}
be expectation completion time of flow $(i, j, k)$ to which flow $(i, j, k)$ has been assigned by $\ell=0$ on network core $p$ when $(i, j, k)\in \mathcal{P}_{i}\cup \mathcal{P}_{j}$ and let
\begin{eqnarray*}
E(i,j,k,p,\ell)  & = & (1+\eta)^{\ell-1}+\frac{d_{ijk}}{s_{p}} \\
           &   & +\sum_{(i', j', k')\in \mathcal{P}_{i}\cup  \mathcal{P}_{j}} \sum_{\substack{t=0\\ (1+\eta)^{t-1}\geq r_{k'}}}^{\ell-1}x_{i'j'k't}\frac{d_{i'j'k'}}{s_{p}} \\
					 &   & +\sum_{\substack{(i', j', k')\in \mathcal{P}_{i}\cup  \mathcal{P}_{j}\\ (i', j', k')<(i, j, k)}} x_{i'j'k'\ell}\frac{d_{i'j'k'}}{s_{p}} \\
					 &   & +\sum_{\substack{(i', j', k')\in \\(\mathcal{F}_{i}\cup \mathcal{F}_{j})\setminus \left\{\mathcal{P}_{i}\cup \mathcal{P}_{j}\right\}}} \sum_{\substack{t=0\\ (1+\eta)^{t-1}\geq r_{k'}}}^{\ell-1} y_{i'j'k'pt}\cdot |I_{t}| \\
					 &   & +\sum_{\substack{(i', j', k')\in \\(\mathcal{F}_{i}\cup \mathcal{F}_{j})\setminus \left\{\mathcal{P}_{i}\cup \mathcal{P}_{j}\right\} \\ (i', j', k')<(i, j, k)}} y_{i'j'k'p\ell}\cdot |I_{\ell}|
\end{eqnarray*}
be expectation completion time of flow $(i, j, k)$ to which flow $(i, j, k)$ has been assigned by $\ell>0$ on network core $p$ when $(i, j, k)\in \mathcal{P}_{i}\cup \mathcal{P}_{j}$.

Let $\mathcal{P}\subseteq F$ represent subsets of flows that have already been assigned the switch-interval pair.
The expected completion time of coflow $k$ is the maximum expected completion time among its flows. We have
\begin{eqnarray*}
E_{\mathcal{P},x}[C_{k}] & = & \max \left\{A, B\right\}
\end{eqnarray*}
where
\begin{eqnarray*}
A & = & \max_{(i, j, k)\in \mathcal{F}_{k}\setminus \mathcal{P}} \left\{\sum_{p=1}^{m} \sum_{\substack{\ell=0\\ (1+\eta)^{\ell-1}\geq r_{k}}}^{L} D_2(i,j,k,p,\ell)\right\},
\end{eqnarray*}
\begin{eqnarray}\label{B2}
B & = & \max_{(i, j, k)\in \mathcal{F}_{k}\cap \mathcal{P}} \left\{E(i,j,k,p,\ell_{ijk}) \right\}
\end{eqnarray}
where
\begin{eqnarray*}
D_2(i,j,k,p,\ell) & = & \frac{s_{p}\cdot y_{ijkp\ell}\cdot |I_{\ell}|}{d_{ijk}} D(i,j,k,p,\ell).
\end{eqnarray*}
In equation~(\ref{B2}), $\ell_{ijk}$ is the $(i, j, k)$ has been assigned to, i.e., $x_{ijkp\ell_{ijk}}=1$.

\begin{lem}\label{lem:lem3}
Let $y$ be an optimal solution to linear programming (\ref{coflow:interval}), $\mathcal{P}\subseteq \mathcal{F}$, and let $x$ represent a fixed assignment of the flows in $\mathcal{P}$ to switch-interval pairs. Moreover, for $(i, j, k)\in \mathcal{F}\setminus \mathcal{P}$, there exists an assignment of $(i, j, k)$ to time interval $I_{\ell}$ with $r_{k}\leq (1+\eta)^{\ell-1}$ on network core $p$ such that
\begin{eqnarray*}
E_{\mathcal{P}\cup \left\{(i, j, k)\right\}, x}\left[\sum_{q}w_{q} C_{q}\right] \leq E_{\mathcal{P}, x}\left[\sum_{q}w_{q} C_{q}\right].
\end{eqnarray*}
\end{lem}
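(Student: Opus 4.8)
The plan is to prove this by the method of conditional expectations, the standard route for derandomizing a randomized rounding scheme. The randomized algorithm (Algorithm~\ref{Alg3}) assigns each flow $(i,j,k)$ independently to a switch-interval pair $(p,\ell)$ with probability $\frac{s_p \cdot y_{ijkp\ell} \cdot |I_\ell|}{d_{ijk}}$, and by constraint~(\ref{interval:a}) these probabilities sum to $1$ over all feasible pairs, so they form a genuine probability distribution supported only on intervals with $(1+\eta)^{\ell-1} \geq r_k$. The quantity $E_{\mathcal{P},x}[\sum_q w_q C_q]$ is the expectation of the objective when the flows in $\mathcal{P}$ are frozen according to $x$ and every remaining flow is still rounded independently. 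The goal is to show that one additional flow $(i,j,k)$ can be frozen to some pair $(p,\ell)$ without increasing this conditional expectation.

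The key step is to apply the law of total expectation to the single flow $(i,j,k)\in\mathcal{F}\setminus\mathcal{P}$, conditioning on which pair $(p,\ell)$ it is rounded to. Because the rounding of $(i,j,k)$ is independent of the rounding of the other still-free flows, this yields the decomposition
\begin{eqnarray*}
E_{\mathcal{P},x}\left[\sum_q w_q C_q\right] = \sum_{p=1}^{m}\sum_{\substack{\ell=0\\ (1+\eta)^{\ell-1}\geq r_k}}^{L} \frac{s_p \cdot y_{ijkp\ell}\cdot |I_\ell|}{d_{ijk}} \cdot E_{\mathcal{P}\cup\{(i,j,k)\},\, x^{(p,\ell)}}\left[\sum_q w_q C_q\right],
\end{eqnarray*}
where $x^{(p,\ell)}$ denotes the assignment $x$ augmented by $x_{ijkp\ell}=1$. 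In other words, the current conditional expectation is exactly a convex combination of the conditional expectations obtained after committing $(i,j,k)$ to each of its possible pairs, with coefficients being precisely the rounding probabilities that sum to $1$ by~(\ref{interval:a}).

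Once this identity is in hand, the conclusion follows from a simple averaging (pigeonhole) argument: the minimum of a finite collection of numbers never exceeds any convex combination of them, so there must exist a pair $(p,\ell)$ with $r_k \leq (1+\eta)^{\ell-1}$ for which
\begin{eqnarray*}
E_{\mathcal{P}\cup\{(i,j,k)\},\, x^{(p,\ell)}}\left[\sum_q w_q C_q\right] \leq E_{\mathcal{P},x}\left[\sum_q w_q C_q\right].
\end{eqnarray*}
This is exactly the pair selected in lines~\ref{alg4-2}--\ref{alg4-3} of Algorithm~\ref{Alg4}, establishing the lemma.

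The main obstacle lies not in the averaging step, which is immediate, but in rigorously justifying the decomposition identity above: that is, in verifying that the explicit completion-time formulas $A$ and $B$ (built from the quantities $D$, $E$ and their weighted versions $D_2$) correctly compute $E_{\mathcal{P},x}[\sum_q w_q C_q]$, and that freezing $(i,j,k)$ and re-averaging over its rounding distribution reproduces this expectation term by term. I would handle this by exploiting the independence of the per-flow rounding choices together with the fact that each $C_q=\max_{(i,j,q)\in\mathcal{F}_q}C_{ijq}$ is, once the relative processing order is fixed by the index tie-breaking rule $(i',j',k')<(i,j,k)$, a linear functional of the indicator contributions of the remaining flows. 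Tracking, for each still-free flow, the linearity of its expected contribution to every $C_{ijq}$ is precisely what makes the objective expectation linear in the rounding distribution of $(i,j,k)$ and hence decomposable as the stated convex combination.
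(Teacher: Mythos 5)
Your proposal is correct and takes essentially the same route as the paper's own proof: the paper likewise expresses $E_{\mathcal{P},x}\left[\sum_{q}w_{q}C_{q}\right]$ as a convex combination of the conditional expectations $E_{\mathcal{P}\cup\{(i,j,k)\},x}\left[\sum_{q}w_{q}C_{q}\right]$ over all feasible switch-interval pairs, with coefficients $\frac{s_{p}\cdot y_{ijkp\ell}\cdot |I_{\ell}|}{d_{ijk}}$ summing to one by constraint (\ref{interval:a}), and then selects a minimizing pair by the same averaging argument. Your closing paragraph justifying the decomposition via independence and linearity merely fleshes out what the paper asserts without further detail.
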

\begin{proof}
The expression for the conditional expectation, $E_{\mathcal{P}, x}[\sum_{q}w_{q} C_{q}]$, can be expressed as a convex combination of conditional expectations $E_{\mathcal{P}\cup \left\{(i, j, k)\right\}, x}[\sum_{q}w_{q} C_{q}]$ across all possible assignments of flow $(i, j, k)$ to switch-interval pair $(p, \ell)$, where the coefficients are given by $\frac{s_{p}\cdot y_{ijkp\ell}\cdot |I_{\ell}|}{d_{ijk}}$. The optimal combination is determined by the condition $E_{\mathcal{P}\cup \left\{(i, j, k)\right\}, x}[\sum_{q}w_{q} C_{q}] \leq E_{\mathcal{P}, x}[\sum_{q}w_{q} C_{q}]$, eventually the claimed result.			
\end{proof}

\begin{thm}\label{thm:thm5}
Algorithm~\ref{Alg4} is a deterministic algorithm with performance guarantee $3+\epsilon$ for the coflow scheduling problem with release times and with performance guarantee $2+\epsilon$ for the coflow scheduling problem without release times.
\end{thm}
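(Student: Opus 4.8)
The plan is to establish Theorem~\ref{thm:thm5} by derandomizing Algorithm~\ref{Alg3} via the method of conditional expectations, using Lemma~\ref{lem:lem3} as the single-step guarantee and Theorem~\ref{thm:thm4} as the starting bound. The central observation is that the quantity $E_{\mathcal{P}, x}[\sum_q w_q C_q]$ tracked in the main loop of Algorithm~\ref{Alg4} interpolates between the purely random assignment and a fully committed one: when $\mathcal{P}=\emptyset$ (so $x$ is the trivial empty assignment) it coincides with the unconditional expectation analyzed in Theorem~\ref{thm:thm4}, and when $\mathcal{P}=\mathcal{F}$ every flow has been fixed to a concrete switch-interval pair, so this ``expectation'' is in fact the deterministic objective value of the schedule that the algorithm outputs.

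First I would pin down the two endpoints of this interpolation. At initialization ($\mathcal{P}=\emptyset$) no flow is fixed, so $E_{\emptyset, x}[\sum_q w_q C_q]$ equals the expected total weighted completion time of the randomized procedure, which by Theorem~\ref{thm:thm4} together with constraint~(\ref{interval:d}) is at most $3(1+\tfrac{\eta}{3})\sum_k w_k C_k^{*}$ in the release-time case and at most $2(1+\tfrac{\eta}{2})\sum_k w_k C_k^{*}$ in the zero-release case. At termination ($\mathcal{P}=\mathcal{F}$) each $x_{ijkp\ell}$ is a fixed $0/1$ variable, the inner maximum in $E_{\mathcal{P},x}[C_k]$ is attained by the committed completion times, and hence $E_{\mathcal{F},x}[\sum_q w_q C_q]=\sum_q w_q C_q$ is exactly the value realized by Algorithm~\ref{Alg4}.

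Next I would chain the single-step inequalities. The loop processes the flows of $\mathcal{F}$ one at a time; when flow $(i,j,k)$ is processed, Lemma~\ref{lem:lem3} guarantees a switch-interval pair $(p,\ell)$ with $r_k\le(1+\eta)^{\ell-1}$ such that committing $(i,j,k)$ to $(p,\ell)$ does not increase the conditional expectation, i.e.
\begin{eqnarray*}
E_{\mathcal{P}\cup\{(i,j,k)\}, x}\!\left[\sum_q w_q C_q\right] & \le & E_{\mathcal{P}, x}\!\left[\sum_q w_q C_q\right].
\end{eqnarray*}
Since Algorithm~\ref{Alg4} selects precisely the minimizing pair, this inequality holds at every iteration. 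Composing the $|\mathcal{F}|$ inequalities along the path from $\mathcal{P}=\emptyset$ to $\mathcal{P}=\mathcal{F}$ yields $\sum_q w_q C_q \le 3(1+\tfrac{\eta}{3})\sum_k w_k C_k^{*}$ (respectively $2(1+\tfrac{\eta}{2})\sum_k w_k C_k^{*}$). Because the linear program~(\ref{coflow:interval}) is a relaxation of the scheduling problem, $\sum_k w_k C_k^{*}$ lower-bounds the optimal total weighted completion time, so setting $\eta=\epsilon$ delivers the claimed ratios $3+\epsilon$ and $2+\epsilon$.

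The step I expect to be the main obstacle is reconciling the deterministic index-based tie-breaking of Algorithm~\ref{Alg4} with the random tie-breaking underlying Theorem~\ref{thm:thm4}, so that $E_{\emptyset,x}[\cdot]$ genuinely matches the bound proved there. In the randomized analysis two flows sharing an interval each precede the other with probability $\tfrac{1}{2}$, producing the factor $\tfrac{1}{2}$ in $Pr_{\ell,p}$; in the deterministic functions $C$, $D$, and $E$ the contribution of a same-interval competitor is instead counted in full, but only for those $(i',j',k')<(i,j,k)$. I would argue that summing the full contribution over index-ordered pairs equals summing the half-contribution over unordered pairs, so the two bookkeeping schemes produce the identical expected within-interval delay and the deterministic conditional expectations dominate their randomized counterparts term by term. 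Once this consistency check is in place, the endpoint identification and the telescoping chain above complete the argument with only routine verification.
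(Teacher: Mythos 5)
There is a genuine gap, and it sits exactly where you predicted: the claimed reconciliation between random and index-based tie-breaking fails. Your argument that ``summing the full contribution over index-ordered pairs equals summing the half-contribution over unordered pairs'' would be valid only if the pairwise delay terms were symmetric and could be averaged across the two members of each pair. Here they are not: the expected delay that $(i',j',k')$ inflicts on $(i,j,k)$ within interval $I_\ell$ is $y_{i'j'k'p\ell}\cdot|I_\ell|$, while the reverse direction contributes $y_{ijkp\ell}\cdot|I_\ell|$, and these enter the objective with different coflow weights. Worse, each $C_k$ is a \emph{maximum} over the flows of $\mathcal{F}_k$, and the completion-time bound only charges delays along the busy period of the single last-finishing flow, so you cannot redistribute half of a pair's contribution to the other member's completion time. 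Concretely, a flow with a large index assigned to $I_\ell$ is delayed by essentially \emph{all} same-interval competitors on both ports, giving a within-interval charge of up to $2|I_\ell|$, versus $|I_\ell|$ in expectation under random tie-breaking; no averaging over flows recovers the lost factor $\tfrac12$ through the max. Hence your endpoint identification $E_{\emptyset,x}[\sum_q w_q C_q]\le 3(1+\tfrac{\eta}{3})\sum_k w_k C_k^{*}$ (and $2(1+\tfrac{\eta}{2})\cdot$ in the zero-release case) is not justified, and the telescoping chain starts from a bound that has not been established.

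The paper's proof does not attempt this identification; it redoes the conditional-expectation estimates of Theorem~\ref{thm:thm4} with the full (unhalved) contributions of lower-index same-interval flows, which is exactly what the functions $C$, $D$, $E$ encode. This yields the weaker inequalities (\ref{thm5:eq3}) and (\ref{thm5:eq4}), where the $\ell=0$ bound becomes $\frac{d_{ijk}}{s_p}+2$ instead of $\frac{d_{ijk}}{s_p}+1$ and the general bound becomes $\frac{d_{ijk}}{s_p}+2\sum_{t=r_{k'}}^{\ell}|I_t|$ instead of $\frac{d_{ijk}}{s_p}+2\sum_{t=r_{k'}}^{\ell-1}|I_t|+|I_\ell|$, producing the constants $3(1+\tfrac{2\eta}{3})$ and $2(1+\eta)$ rather than $3(1+\tfrac{\eta}{3})$ and $2(1+\tfrac{\eta}{2})$. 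The claimed ratios $3+\epsilon$ and $2+\epsilon$ are then recovered by choosing $\eta=\epsilon/2$, not $\eta=\epsilon$ as in your write-up, before applying constraint (\ref{interval:d}), the LP lower bound, and the inductive application of Lemma~\ref{lem:lem3} --- the latter parts of your proposal (the interpolation between $\mathcal{P}=\emptyset$ and $\mathcal{P}=\mathcal{F}$ and the telescoping via Lemma~\ref{lem:lem3}) are correct and match the paper. To repair your argument, replace the asserted equality of the two bookkeeping schemes with a fresh derivation of the deterministic conditional-expectation bounds and absorb the doubled within-interval term into the constant via the rescaled $\eta$.
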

\begin{proof}

With similar arguments as in the proof of theorem~\ref{thm:thm4}, we set the assignment of flow $(i, j, k)$ to time interval $I_{0}$ and network core $p$ constant and establish an upper bound on the conditional expectation $E_{\ell=0, p}[C_{k}]$:
\begin{eqnarray}\label{thm5:eq3}
E_{\ell=0, p}[C_{k}]  & \leq & E_{\ell=0, p}\left[\sum_{(i', j', k')\in \mathcal{P}_{i}\cup \mathcal{P}_{j}} \frac{d_{i'j'k'}}{s_{p}}\right] \notag\\
              & \leq & \frac{d_{ijk}}{s_{p}} +\sum_{(i', j', k')\in \mathcal{P}_{i}\setminus \left\{(i, j, k)\right\}} y_{i'j'k'p0}\notag\\
					    &      & +\sum_{(i', j', k')\in\mathcal{P}_{j}\setminus \left\{(i, j, k)\right\}}y_{i'j'k'p0} \notag\\
							& \leq & \frac{d_{ijk}}{s_{p}} + 2 \notag\\
							& =    & 3\left(1+\frac{2\eta}{3}\right)\left(\frac{\frac{d_{ijk}}{s_{p}} + 2}{3(1+\frac{2\eta}{3})}\right) \notag\\
							& \leq & 3\left(1+\frac{2\eta}{3}\right)\left(\frac{1}{2}+\frac{1}{2}\frac{d_{ijk}}{s_{p}}\right) \notag\\
							& =    & 3\left(1+\frac{2\eta}{3}\right)\left((1+\eta)^{\ell-1}+\frac{1}{2}\frac{d_{ijk}}{s_{p}}\right).
\end{eqnarray}
We set the assignment of flow $(i, j, k)$ to time interval $I_{\ell}$ and network core $p$ constant and establish an upper bound on the conditional expectation $E_{\ell, p}[C_{k}]$:
\begin{eqnarray}\label{thm5:eq4}
E_{\ell, p}[C_{k}]  & \leq & (1+\eta)^{\ell-1}+E_{\ell, p}\left[\sum_{(i', j', k')\in \mathcal{P}_{i}\cup \mathcal{P}_{j}} \frac{d_{i'j'k'}}{s_{p}}\right] \notag\\
              & \leq & (1+\eta)^{\ell-1}+\frac{d_{ijk}}{s_{p}}  \notag\\
					    &      & +\sum_{(i', j', k')\in \mathcal{P}_{i}\setminus \left\{(i, j, k)\right\}} \frac{d_{i'j'k'}}{s_{p}} \cdot Pr_{\ell,p}(i', j', k') \notag\\
					    &      & +\sum_{(i', j', k')\in\mathcal{P}_{j}\setminus \left\{(i, j, k)\right\}} \frac{d_{i'j'k'}}{s_{p}} \cdot Pr_{\ell,p}(i', j', k') \notag\\
              & \leq & (1+\eta)^{\ell-1}+\frac{d_{ijk}}{s_{p}} +2 \sum_{t=r_{k'}}^{\ell}|I_{t}| \notag\\
							& \leq & 3\left(1+\frac{2\eta}{3}\right)(1+\eta)^{\ell-1}+\frac{d_{ijk}}{s_{p}} \notag\\
							& \leq & 3\left(1+\frac{2\eta}{3}\right)\left((1+\eta)^{\ell-1}+\frac{1}{2}\frac{d_{ijk}}{s_{p}}\right)
\end{eqnarray}
where $Pr_{\ell,p}(i', j', k')=\sum_{t=r_{k'}}^{\ell}\frac{s_{p}\cdot y_{i'j'k'pt}\cdot |I_{t}|}{d_{i'j'k'}}$.

Finally, applying the formula of total expectation to eliminate conditioning results in inequalities (\ref{thm5:eq3}) and (\ref{thm5:eq4}). We have
\begin{eqnarray*}\label{thm5:eq1}
 E[C_{k}]  & \leq & 3 \left(1+\frac{2\eta}{3}\right) \sum_{p=1}^{m}\sum_{\substack{\ell=0\\ (1+\eta)^{\ell-1}\geq r_{k}}}^{L} f(i,j,k,p,\ell)
\end{eqnarray*}
for the coflow scheduling problem with release times. We also can obtain
\begin{eqnarray*}\label{thm5:eq2}
E[C_{k}] & \leq  & 2 \left(1+\eta\right) \sum_{p=1}^{m}\sum_{\substack{\ell=0\\ (1+\eta)^{\ell-1}\geq r_{k}}}^{L} f(i,j,k,p,\ell)
\end{eqnarray*}
for the coflow scheduling problem without release times. When $\eta=\epsilon/2$ and $\epsilon>0$, this together with constraints (\ref{interval:d}) and inductive application of Lemma~\ref{lem:lem3} yields the theorem.
\end{proof}

\section{Deterministic Approximation Algorithm for Minimizing the Makespan}\label{sec:Algorithm5}
This section introduces a deterministic algorithm for minimizing makespan. We can modify the objective function and constraints of linear programming (\ref{coflow:interval}) to obtain the following linear programming:
\begin{subequations}\label{coflow:max}
\begin{align}
& \text{min}  && C_{max}     &   & \tag{\ref{coflow:max}} \\
& \text{s.t.} && (\ref{interval:a})-(\ref{interval:c}), (\ref{interval:g}) && \\
&  && C_{max}\geq C_{ijk}, && \forall (i, j, k)\in \mathcal{F} \label{max:d} \\
\end{align}
\end{subequations}
where
\begin{eqnarray*}
C_{ijk} = \sum_{p=1}^{m}\sum_{\substack{\ell=0\\ (1+\eta)^{\ell-1}\geq r_{k}}}^{L} f(i,j,k,p,\ell)
\end{eqnarray*}
and 
\begin{eqnarray*}
f(i,j,k,p,\ell) = \left(\frac{s_{p}}{d_{ijk}}(1+\eta)^{\ell-1}+\frac{1}{2}\right)\cdot y_{ijkp\ell}\cdot |I_{\ell}|.
\end{eqnarray*}

Algorithm~\ref{Alg5} is derived from Algorithm~\ref{Alg4}. We only modify the linear programming (\ref{coflow:interval}) to linear programming (\ref{coflow:max}) in line~\ref{alg5-7} and the expression $E_{\mathcal{P}\cup \left\{(i, j, k)\right\}, x}[\sum_{q}w_{q} C_{q}]$ in line~\ref{alg5-6} to $E_{\mathcal{P}\cup \left\{(i, j, k)\right\}, x}[C_{max}]$.

\begin{algorithm}
\caption{Deterministic Interval-Indexed Coflow Scheduling (Makespan)}
    \begin{algorithmic}[1]
				\STATE Compute an optimal solution $y$ to linear programming (\ref{coflow:max}). \label{alg5-7}
				\STATE Set $\mathcal{P}=\emptyset$; $x=0$; \label{alg5-2}
				\FOR{all $(i, j, k)\in F$} 
					\STATE for all possible assignments of $(i, j, k)\in \mathcal{F}\setminus \mathcal{P}$ to switch-interval pair $(p, \ell)$ compute $E_{\mathcal{P}\cup \left\{(i, j, k)\right\}, x}[C_{max}]$; \label{alg5-6}
					\STATE Determine the switch-interval pair $(p, \ell)$ that minimizes the conditional expectation $E_{\mathcal{P}\cup \left\{(i, j, k)\right\}, x}[\sum_{q}w_{q} C_{q}]$
					\STATE Set $\mathcal{P}=\mathcal{P}\cup \left\{(i, j, k)\right\}$; $x_{ijkp\ell}=1$; set $t_{ijk}$ to the left endpoint of time interval $I_{\ell}$; set $\mathcal{A}_{p}=\mathcal{A}_{p}\cup (i, j, k)$;
				\ENDFOR \label{alg5-3}
		    \FOR{each $p\in \mathcal{M}$ do in parallel} \label{alg5-4}
						\STATE wait until the first flow is released 
				    \WHILE{there is some incomplete flow}
								\FOR{every released and incomplete flow $(i, j, k)\in \mathcal{A}_{p}$ in non-decreasing order of $t_{ijk}$, breaking ties with smaller indices}
										\IF{the link $(i, j)$ is idle in switch $p$}
												\STATE schedule flow $(i, j, k)$\label{alg5-1}
										\ENDIF
								\ENDFOR
						    \WHILE{no new flow is completed or released}
						        \STATE transmit the flows that get scheduled in line \ref{alg5-1} at maximum rate $s_p$.
						    \ENDWHILE
				    \ENDWHILE
				\ENDFOR  \label{alg5-5}				
   \end{algorithmic}
\label{Alg5}
\end{algorithm}

The expected makespan in the schedule output by Algorithm~\ref{Alg5} is
\begin{eqnarray*}
E[C_{max}] & = & \max_{(i, j, k)\in \mathcal{F}} \left\{\sum_{p=1}^{m}\sum_{\substack{\ell=0\\ (1+\eta)^{\ell-1}\geq r_{k}}}^{L} C_2(i,j,k,p,\ell)\right\}
\end{eqnarray*}
where 
\begin{eqnarray*}
C_2(i,j,k,p,\ell) & = & \frac{s_{p}\cdot y_{ijkp\ell}\cdot |I_{\ell}|}{d_{ijk}} C(i,j,k,p,\ell).
\end{eqnarray*}

Let $\mathcal{P}\subseteq F$ represent subsets of flows that have already been assigned the switch-interval pair. The expected makespan is the maximum expected completion time among all flows. We have
\begin{eqnarray*}
E_{\mathcal{P},x}[C_{maxk}] & = & \max \left\{A, B\right\}
\end{eqnarray*}
where
\begin{eqnarray*}
A & = & \max_{(i, j, k)\in \mathcal{F}\setminus \mathcal{P}} \left\{\sum_{p=1}^{m} \sum_{\substack{\ell=0\\ (1+\eta)^{\ell-1}\geq r_{k}}}^{L} D_2(i,j,k,p,\ell)\right\},
\end{eqnarray*}
\begin{eqnarray*}
B & = & \max_{(i, j, k)\in \mathcal{F}\cap \mathcal{P}} \left\{E(i,j,k,p,\ell_{ijk}) \right\}
\end{eqnarray*}
where
\begin{eqnarray*}
D_2(i,j,k,p,\ell) & = & \frac{s_{p}\cdot y_{ijkp\ell}\cdot |I_{\ell}|}{d_{ijk}} D(i,j,k,p,\ell).
\end{eqnarray*}

\begin{lem}\label{lem:lem4}
Let $y$ be an optimal solution to linear programming (\ref{coflow:max}), $\mathcal{P}\subseteq \mathcal{F}$, and let $x$ represent a fixed assignment of the flows in $\mathcal{P}$ to switch-interval pairs. Moreover, for $(i, j, k)\in \mathcal{F}\setminus \mathcal{P}$, there exists an assignment of $(i, j, k)$ to time interval $I_{\ell}$ on network core $p$ such that
\begin{eqnarray*}
E_{\mathcal{P}\cup \left\{(i, j, k)\right\}, x}\left[C_{max}\right] \leq E_{\mathcal{P}, x}\left[C_{max}\right].
\end{eqnarray*}
\end{lem}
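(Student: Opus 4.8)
The plan is to reproduce the one-step conditional-expectation argument of Lemma~\ref{lem:lem3}, now for the objective $C_{max}=\max_{(i,j,k)\in\mathcal{F}} C_{ijk}$. I fix the partial assignment $x$ of the flows already in $\mathcal{P}$ and regard the LP values $y$ as the marginal probabilities governing the still-unassigned flows, exactly as the randomized routing in Algorithm~\ref{Alg3} prescribes. Conditioning on the random switch-interval choice of the single extra flow $(i,j,k)\in\mathcal{F}\setminus\mathcal{P}$ and invoking the tower property of expectation, I would write
\[
E_{\mathcal{P},x}[C_{max}] \;=\; \sum_{p=1}^{m}\sum_{\substack{\ell=0\\(1+\eta)^{\ell-1}\ge r_k}}^{L} \frac{s_p\,y_{ijkp\ell}\,|I_\ell|}{d_{ijk}}\; E_{\mathcal{P}\cup\{(i,j,k)\},\,x}[C_{max}],
\]
where each summand is the conditional expected makespan given that $(i,j,k)$ is pinned to $(p,\ell)$, the weight $\tfrac{s_p y_{ijkp\ell}|I_\ell|}{d_{ijk}}$ is precisely the probability of that routing, and the weights sum to $1$ by constraint~(\ref{interval:a}). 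Granting this identity, the lemma is immediate: a value equal to a convex combination of finitely many numbers is at least the smallest of them, so the pair $(p,\ell)$ minimizing $E_{\mathcal{P}\cup\{(i,j,k)\},x}[C_{max}]$—which is exactly the pair the algorithm selects—satisfies the claimed inequality.

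The step that requires genuine care, and what I expect to be the main obstacle, is justifying the displayed identity for a $\max$ objective. For the additive objective of Lemma~\ref{lem:lem3} the decomposition is transparent by linearity of expectation, so the per-flow bookkeeping terms simply add; for the makespan one must condition on the assignment of $(i,j,k)$ \emph{before} resolving the maximum over flows. The ordering is essential, and it is here that the potential function must be chosen correctly: one has to track the expected makespan $E[\max_f C_f]$ rather than the maximum of the expected flow completion times $\max_f E[C_f]$, because only the former decomposes as an exact convex combination under the tower property. Indeed, for each fixed flow $f$ the tower property already gives $E_{\mathcal{P},x}[C_{f}]=\sum_{(p,\ell)}\lambda_{p\ell}\,E_{\mathcal{P}\cup\{(i,j,k)\},x}[C_{f}]$, whence $\max_f E_{\mathcal{P},x}[C_f]\le\sum_{(p,\ell)}\lambda_{p\ell}\max_f E_{\ldots}[C_f]$ by Jensen applied to the maximum; this inequality points in the wrong direction and does not by itself produce a single good assignment. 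Treating $C_{max}$ as one random variable and averaging it directly sidesteps this failure, and only afterward do I identify the conditional expectation on each branch with the computed value $\max\{A,B\}$ built from the functions $D$, $E$, and $D_2$ of Section~\ref{sec:Algorithm4}.

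The remaining work is the routine verification that these bookkeeping expressions evaluate the relevant conditional expected makespan term by term: $A$ aggregates the expected completion times of the not-yet-assigned flows under their LP marginals $y$, while $B$ records the completion times induced by the fixed assignment $x$ once $(i,j,k)$ is pinned to $(p,\ell)$. This verification is inherited essentially verbatim from the derivation preceding Lemma~\ref{lem:lem3}, with $\sum_q w_q C_q$ replaced throughout by $C_{max}$ and the weights $w_q$ discarded. Combining the exact convex-combination identity with the elementary averaging bound then yields the lemma, which, applied inductively over the $|\mathcal{F}|$ flows in the manner of the proof of Theorem~\ref{thm:thm5}, drives the derandomized makespan guarantee.
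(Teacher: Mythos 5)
Your proof is correct and is essentially the paper's own: the paper's argument is exactly your tower-property/convex-combination identity, with coefficients $\frac{s_{p}\cdot y_{ijkp\ell}\cdot |I_{\ell}|}{d_{ijk}}$ summing to one by constraint~(\ref{interval:a}), followed by the observation that some branch of a convex combination is at most the average, precisely mirroring Lemma~\ref{lem:lem3}. Your added caveat---that the exact decomposition holds only for $E[\max_{f}C_{f}]$ treated as a single random variable, and that Jensen gives the wrong-direction inequality for $\max_{f}E[C_{f}]$---is sharper than the paper's one-line proof, but note that it undercuts your closing paragraph: the computable bookkeeping quantity $\max\left\{A,B\right\}$ is precisely $\max_{f}E[C_{f}]$, so the ``routine'' identification of each branch's conditional expected makespan with $\max\left\{A,B\right\}$ is not an exact identity (a slippage the paper itself commits when it declares $E_{\mathcal{P},x}[C_{max}]=\max\left\{A,B\right\}$ in the text preceding the lemma, and one worth flagging rather than inheriting verbatim).
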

\begin{proof}
The expression for the conditional expectation, $E_{\mathcal{P}, x}[C_{max}]$, can be expressed as a convex combination of conditional expectations $E_{\mathcal{P}\cup \left\{(i, j, k)\right\}, x}[C_{max}]$ across all possible assignments of flow $(i, j, k)$ to switch-interval pair $(p, \ell)$, where the coefficients are given by $\frac{s_{p}\cdot y_{ijkp\ell}\cdot |I_{\ell}|}{d_{ijk}}$. The optimal combination is determined by the condition $E_{\mathcal{P}\cup \left\{(i, j, k)\right\}, x}[C_{max}] \leq E_{\mathcal{P}, x}[C_{max}]$, eventually the claimed result.			
\end{proof}

\begin{thm}\label{thm:thm6}
Algorithm~\ref{Alg5} is a deterministic algorithm with performance guarantee $2+\epsilon$ for the minimizing makespan problem.
\end{thm}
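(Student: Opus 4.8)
The plan is to follow the template of the proof of Theorem~\ref{thm:thm5}, specialized to the no-release-date setting: the makespan problem here is $Q|pmtn|C_{max}$, so every $r_k=0$ and the restriction $(1+\eta)^{\ell-1}\geq r_k$ appearing in $C_{ijk}$ and in Lemma~\ref{lem:lem4} is vacuous. First I would reduce the makespan to a single flow. Let $(i,j,k)$ be the flow that completes last in the schedule produced by Algorithm~\ref{Alg5}, so $C_{max}=C_{ijk}$, and let $(p,\ell)$ be the switch-interval pair to which it is assigned. The goal is to bound $E[C_{ijk}]$ against $\sum_{p,\ell}f(i,j,k,p,\ell)$, which the linear program~(\ref{coflow:max}) lower-bounds by $C_{max}^{*}$ via constraint~(\ref{max:d}); since~(\ref{coflow:max}) is a relaxation, $C_{max}^{*}$ is itself a lower bound on the optimum makespan.

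The key step is the conditional bound. Conditioning on the assignment of $(i,j,k)$ to $(p,\ell)$, I would reuse the idle-time inequality~(\ref{thm4:eq1}) but exploit the absence of release dates to take $\tau=0$: while $(i,j,k)$ is still incomplete, its input port $i$ and its output port $j$ cannot both be idle, for otherwise the list-scheduling rule would immediately schedule $(i,j,k)$; hence at least one of the two ports is busy throughout $[0,C_{ijk}]$ and $C_{ijk}\leq \sum_{(i',j',k')\in\mathcal{P}_i\cup\mathcal{P}_j}d_{i'j'k'}/s_p$. Taking conditional expectations and bounding, for port $i$ and port $j$ separately, the expected load of the higher-priority flows sharing that port by $\sum_{t=0}^{\ell}|I_t|=(1+\eta)^{\ell}$ through the capacity constraints~(\ref{interval:b}) and~(\ref{interval:c}) together with the deterministic tie-breaking of Algorithm~\ref{Alg5}, yields $E_{\ell,p}[C_{max}]\leq d_{ijk}/s_p+2(1+\eta)^{\ell}$. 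The essential point is that, unlike the release-date analysis of~(\ref{thm5:eq4}), there is no additive $(1+\eta)^{\ell-1}$ start-up term; this is precisely what separates the factor $2$ from the factor $3$.

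Finally I would remove the conditioning by the law of total expectation, weighting $E_{\ell,p}[C_{max}]$ by the probabilities $s_p\,y_{ijkp\ell}\,|I_\ell|/d_{ijk}$. Writing $A=\sum_{p,\ell}\frac{s_p}{d_{ijk}}(1+\eta)^{\ell-1}y_{ijkp\ell}|I_\ell|$ and $B=\sum_{p,\ell}y_{ijkp\ell}|I_\ell|$, the weighted sum of $d_{ijk}/s_p+2(1+\eta)^{\ell}$ equals $B+2(1+\eta)A$, whereas $\sum_{p,\ell}f(i,j,k,p,\ell)=A+\tfrac12 B$; since $B\leq(1+\eta)B$, this gives $E[C_{max}]\leq 2(1+\eta)\bigl(A+\tfrac12 B\bigr)=2(1+\eta)\sum_{p,\ell}f(i,j,k,p,\ell)\leq 2(1+\eta)\,C_{max}^{*}$. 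An inductive application of Lemma~\ref{lem:lem4} then transfers this expected guarantee to the fully determined schedule of Algorithm~\ref{Alg5}, and choosing $\eta=\epsilon/2$ delivers the ratio $2+\epsilon$.

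I expect the main obstacle to be justifying the $\tau=0$ reduction cleanly and, relatedly, handling the interval $\ell=0$. A naive term-by-term comparison of $E_{\ell,p}[C_{max}]$ against $2(1+\eta)f(i,j,k,p,\ell)$ fails at $\ell=0$, where $2|I_0|=2$ would have to be charged to the $\tfrac12$-scaled terms and only the normalization $d_{ijk}/s_{max}\geq 1$ is available; it is therefore important to carry out the comparison globally, at the level of the aggregate sums $A$ and $B$, rather than interval by interval. Verifying that the convex-combination argument of Lemma~\ref{lem:lem4} indeed makes the deterministic makespan no larger than the fully randomized expectation is the remaining routine, but necessary, ingredient.
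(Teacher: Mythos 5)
Your proposal reproduces the paper's architecture almost exactly: condition on the switch-interval pair $(p,\ell)$ assigned to the last-finishing flow, bound the conditional expectation by the loads on ports $i$ and $j$, remove the conditioning with the weights $s_p\,y_{ijkp\ell}|I_\ell|/d_{ijk}$, invoke constraint~(\ref{max:d}) of the relaxation~(\ref{coflow:max}), and derandomize by inductive application of Lemma~\ref{lem:lem4} with $\eta=\epsilon/2$. Your $\tau=0$ observation --- that with zero release dates at least one of the two ports is busy throughout $[0,C_{ijk}]$, so no additive lead term $(1+\eta)^{\ell-1}$ appears --- is precisely how the paper's proof of this theorem differs from that of Theorem~\ref{thm:thm5}, and your conditional bound $E_{\ell,p}[C_{max}]\le d_{ijk}/s_p+2(1+\eta)^{\ell}$ coincides with the paper's $\frac{d_{ijk}}{s_p}+2\sum_t|I_t|$.

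The one place you depart from the paper --- the global $A$/$B$ aggregation meant to rescue the $\ell=0$ case --- contains a genuine error. Your identity ``weighted sum $=B+2(1+\eta)A$'' needs $A$ to carry the \emph{literal} value $(1+\eta)^{-1}$ at $\ell=0$, whereas your identification $A+\tfrac12B=\sum_{p,\ell}f(i,j,k,p,\ell)$ needs the paper's convention $(1+\eta)^{\ell-1}:=\tfrac12$ at $\ell=0$; since $\frac{1}{1+\eta}>\tfrac12$ for $\eta<1$, you cannot have both, and the discrepancy points the wrong way. Carrying the convention through, with $q_0=\sum_p s_p\,y_{ijkp0}|I_0|/d_{ijk}$ the unconditioned bound is $B+2(1+\eta)A+(1-\eta)q_0$, and matching it against $2(1+\eta)\bigl(A+\tfrac12B\bigr)$ requires $(1-\eta)q_0\le\eta B$ --- which fails for small $\eta$: if all LP mass of $(i,j,k)$ sits on $(p,0)$ with $d_{ijk}/s_p=1$, then $B=q_0=1$, the conditional bound is $d_{ijk}/s_p+2=3$ while $\sum_{p,\ell}f=1$, a ratio of $3$, not $2+\epsilon$. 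So the global comparison merely relocates the $\ell=0$ obstacle you correctly identified; it does not remove it. You should know that the paper's own proof is no better at this corner: its $\ell=0$ chain asserts $\frac{d_{ijk}}{s_p}+2\le 2(1+\eta)\bigl(\tfrac12+\tfrac12\frac{d_{ijk}}{s_p}\bigr)$, which is equivalent to $\eta\bigl(1+\frac{d_{ijk}}{s_p}\bigr)\ge1$ and, under the normalization $d_{ijk}/s_p\ge1$, holds only for $\eta\ge\tfrac12$, i.e.\ $\epsilon\ge1$ (the analogous factor-$3$ step in Theorem~\ref{thm:thm5} does hold for all $\eta>0$, so the defect is specific to the factor-$2$ claim). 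Closing the gap for small $\eta$ needs an ingredient neither you nor the paper deploys: for instance, the capacity constraints~(\ref{interval:b})--(\ref{interval:c}) force $\sum_{(i',j',k')<(i,j,k)}y_{i'j'k'p0}\le 1-y_{ijkp0}$, so the interference at $\ell=0$ shrinks as the flow's own mass there grows; alternatively, random tie-breaking as in Theorem~\ref{thm:thm4} supplies the factor $\tfrac12$ that makes the $\ell=0$ comparison exact. As written, both your aggregate argument and your fallback term-by-term argument fail at $\ell=0$, so the proposal does not establish the $2+\epsilon$ guarantee.
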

\begin{proof}
Without loss of generality, assume that the last completed flow is $(i, j, k)$. We set the assignment of flow $(i, j, k)$ to time interval $I_{0}$ and network core $p$ and establish an upper bound on the conditional expectation $E_{\ell=0, p}[C_{max}]$:
\begin{eqnarray}\label{thm6:eq3}
E_{\ell=0, p}[C_{max}]  & \leq & E_{\ell=0, p}\left[\sum_{(i', j', k')\in \mathcal{P}_{i}\cup \mathcal{P}_{j}} \frac{d_{i'j'k'}}{s_{p}}\right] \notag\\
              & \leq & \frac{d_{ijk}}{s_{p}} +\sum_{(i', j', k')\in \mathcal{P}_{i}\setminus \left\{(i, j, k)\right\}} y_{i'j'k'p0}\notag\\
					    &      & +\sum_{(i', j', k')\in\mathcal{P}_{j}\setminus \left\{(i, j, k)\right\}}y_{i'j'k'p0} \notag\\
							& \leq & \frac{d_{ijk}}{s_{p}} + 2 \notag\\
							& =    & 2\left(1+\eta\right)\left(\frac{\frac{d_{ijk}}{s_{p}} + 2}{2(1+\eta)}\right) \notag\\
							& \leq & 2\left(1+\eta\right)\left(\frac{1}{2}+\frac{1}{2}\frac{d_{ijk}}{s_{p}}\right) \notag\\
							& =    & 2\left(1+\eta\right)\left((1+\eta)^{\ell-1}+\frac{1}{2}\frac{d_{ijk}}{s_{p}}\right).
\end{eqnarray}
We set the assignment of flow $(i, j, k)$ to time interval $I_{\ell}$ and network core $p$ and establish an upper bound on the conditional expectation $E_{\ell, p}[C_{k}]$:
\begin{eqnarray}\label{thm6:eq4}
E_{\ell, p}[C_{max}]  & \leq & E_{\ell, p}\left[\sum_{(i', j', k')\in \mathcal{P}_{i}\cup \mathcal{P}_{j}} \frac{d_{i'j'k'}}{s_{p}}\right] \notag\\
              & \leq & \frac{d_{ijk}}{s_{p}}  \notag\\
					    &      & +\sum_{(i', j', k')\in \mathcal{P}_{i}\setminus \left\{(i, j, k)\right\}} \frac{d_{i'j'k'}}{s_{p}} \cdot Pr_{\ell,p}(i', j', k') \notag\\
					    &      & +\sum_{(i', j', k')\in\mathcal{P}_{j}\setminus \left\{(i, j, k)\right\}} \frac{d_{i'j'k'}}{s_{p}} \cdot Pr_{\ell,p}(i', j', k') \notag\\
              & \leq & \frac{d_{ijk}}{s_{p}} +2 \sum_{t=r_{k'}}^{\ell}|I_{t}| \notag\\
							& \leq & 2(1+\eta)^{\ell}+\frac{d_{ijk}}{s_{p}} \notag\\
							& \leq & 2\left(1+\eta\right)\left((1+\eta)^{\ell-1}+\frac{1}{2}\frac{d_{ijk}}{s_{p}}\right)
\end{eqnarray}
where $Pr_{\ell,p}(i', j', k')=\sum_{t=r_{k'}}^{\ell}\frac{s_{p}\cdot y_{i'j'k'pt}\cdot |I_{t}|}{d_{i'j'k'}}$.

Finally, applying the formula of total expectation to eliminate conditioning results in inequalities (\ref{thm6:eq3}) and (\ref{thm6:eq4}). We have
\begin{eqnarray*}\label{thm6:eq2}
E[C_{k}] & \leq  & 2 \left(1+\eta\right) \sum_{p=1}^{m}\sum_{\substack{\ell=0\\ (1+\eta)^{\ell-1}\geq r_{k}}}^{L} f(i,j,k,p,\ell)
\end{eqnarray*}
for the coflow scheduling problem without release times. When $\eta=\epsilon/2$ and $\epsilon>0$, this together with constraints (\ref{max:d}) and inductive application of Lemma~\ref{lem:lem4} yields the theorem.
\end{proof}

\section{Concluding Remarks}\label{sec:Conclusion}
This paper investigates the scheduling problem of coflows with release times in heterogeneous parallel networks, aiming to minimize the total weighted completion time and makespan, respectively. Our approximation algorithm achieves approximation ratios of $3 + \epsilon$ and $2 + \epsilon$ for arbitrary and zero release times, respectively. This result improves upon the approximation ratio of $O(\log m/ \log \log m)$ obtained by our previous paper~\cite{CHEN2023104752}. This result also improves upon the previous approximation ratios of $6-\frac{2}{m}$ and $5-\frac{2}{m}$ for arbitrary and zero release times, respectively, in identical parallel networks~\cite{chen2023efficient1}. In the minimizing makespan problem, the algorithm proposed in this paper achieves an approximation ratio of $2 + \epsilon$. This result improves upon the approximation ratio of $O(\log m/ \log \log m)$ obtained by our previous paper~\cite{CHEN2023104752}.

\end{document}